\documentclass[11pt,reqno]{amsart}
\usepackage{geometry}
\usepackage{amsfonts}
\usepackage{amsmath}
\usepackage{amsthm}
\usepackage{amssymb}
\usepackage{mathrsfs}

\usepackage{mathtools} 
\usepackage{enumitem}  
\usepackage{lmodern}
\usepackage{microtype}
\usepackage{setspace}
\usepackage{subcaption}
\usepackage{graphicx}
\usepackage{booktabs}

\usepackage[ruled,vlined,linesnumbered]{algorithm2e}
\usepackage{algorithmic}
\usepackage{xcolor}

\usepackage{geometry}
\usepackage{hyperref}
\hypersetup{
	colorlinks = true,
	allcolors  = blue,  
}

\mathtoolsset{showonlyrefs} 
\numberwithin{equation}{section}
\allowdisplaybreaks

\newenvironment{cdate}{
	\noindent
	\textbf{\small This version:}
	\small
}{\par}

\newenvironment{keys}{
	\noindent
	\textbf{\small Keywords:}
	\small
}{\par}

\newenvironment{msc}{
	\noindent
	\textbf{\small MSC 2020 codes:}
	\small
}{\par}

\newcommand{\hemail}[1]{\href{mailto:{#1}}{\normalfont \textnormal{#1}}}

\theoremstyle{plain} 
\newtheorem{theorem}{Theorem}[section]

\newtheorem{lemma}[theorem]{Lemma}

\newtheorem*{notation}{Notation}

\theoremstyle{definition} 
\newtheorem{discussion}[theorem]{Discussion}
\newtheorem{definition}[theorem]{Definition}
\newtheorem{example}[theorem]{Example}

\newtheorem{SA}[theorem]{Standing Assumption}

\newtheorem{remark}[theorem]{Remark}
\newtheorem*{FM}{Financial Market Model}

\numberwithin{equation}{section}

\DeclareMathOperator{\FD}{FD} 

\newcommand{\rd}{\mathrm{d}}
\newcommand{\vd}{\,\mathrm{d}}

\newcommand{\on}{\operatorname}
\newcommand{\m}{m}
\newcommand{\s}{s}
\newcommand{\bR}{\mathbb{R}}

\newcommand{\1}{\mathbf{1}} 
\newcommand{\cF}{\mathcal{F}}
\newcommand{\Y}{S}

\renewcommand{\S}{S}
\newcommand{\X}{X}

\renewcommand{\l}{\alpha}
\renewcommand{\r}{\beta}

\newcommand{\I}{J}

\newcommand{\tm}{\widetilde{\m}}
\newcommand{\hm}{\nu} 

\newcommand{\G}{g}
\newcommand{\g}{G}
\newcommand{\om}{\overline{\m}}
\newcommand{\oS}{\overline{\Y}}
\newcommand{\cH}{H} 
\newcommand{\tY}{\widetilde{\Y}}
\renewcommand{\varkappa}{\kappa}
\newcommand{\bW}{\mathbb{W}}
\newcommand{\cW}{\mathcal{W}}

\newcommand{\rr}{\overline{u}}
\newcommand{\J}{J^*}

\DeclareMathOperator{\IR}{\mathbb{R}}

\newcommand{\ol}[0]{\overline}
\newcommand{\mc}[0]{\mathcal}
\newcommand{\wh}[0]{\widehat}
\newcommand{\wt}[0]{\widetilde}

\newcommand{\indic}[1]{\mathbf{1}_{#1}}
\newcommand{\indicB}[1]{\mathbf{1}_{\{#1\}}}
 
\newcommand{\sqbraces}[1]{ \left[{#1}\right] }

\title[Hedging in general diffusion markets]{On the hedging problem \\ in general 1D diffusion markets}

\author[A. Anagnostakis]{Alexis Anagnostakis}
\address{
	A. Anagnostakis -- Centro de Modelamiento Matemático (CNRS IRL2807), Universidad de Chile, Santiago, Chile.
}
\email{\hemail{alexis.anagnostakis@yandex.com}, \hemail{aanagnostakis@cmm.uchile.cl}}

\author[D. Criens]{David Criens}
\address{D. Criens -- University of Freiburg, Ernst-Zermelo-Str. 1, 79104 Freiburg, Germany.}
\email{\hemail{david.criens@stochastik.uni-freiburg.de}}

\author[M. Urusov]{Mikhail Urusov}
\address{M. Urusov -- University of Duisburg-Essen, Thea-Leymann-Str. 9, 45127 Essen, Germany.}
\email{\hemail{mikhail.urusov@uni-due.de}}

\begin{document}

	\begin{abstract}
		We develop a PDE-based methodology for pricing and hedging European contingent claims in general one-dimensional diffusion markets characterized solely by their scale function and speed measure, possibly without a classical SDE representation, and with constant interest rate. 
		We derive a hedging equation whose solution generates a self-financing hedging strategy and provide
		sufficient conditions on scale, speed, and interest rate, under which this strategy achieves the minimal hedging capital, expressed through the no free lunch with vanishing risk (NFLVR) condition.
		We further prove necessary and sufficient conditions for NFLVR and characterize the class of equivalent local martingale measures through an auxiliary diffusion whose scale and speed characteristics are determined by those of the real-world diffusion and by the interest rate. When the NFLVR condition fails, the framework may produce multiple hedging equations corresponding to non-minimal strategies, whose associated prices can exceed the minimal hedging capital. We illustrate both the effectiveness and limitations of the approach through numerical experiments involving diffusion models with irregular features.
	\end{abstract}

	\maketitle
	
	\begin{quote}
		\begin{cdate}
			\today
		\end{cdate}
		
		\begin{msc}
			91G20, 60J60, 91G80.
		\end{msc}
		
		\begin{keys}
			Pricing; hedging contingent claims; generalized second order differential operators; general diffusion; scale and speed; non-zero interest rates.
		\end{keys}
	\end{quote}

	\section{Introduction}
	
	Pricing and hedging of contingent claims are among the most important problems in mathematical finance.
	One fundamental method of solving these problems is the {\em partial differential equation (PDE) methodology} that essentially traces back to groundbreaking contributions of Black--Scholes and Merton, awarded with the 1997 Nobel prize in economics. 
	Traditionally, the method is carried out for market models with risky assets  whose price processes \(\Y = (\Y_t)_{t \geq 0}\) solve a stochastic differential equation (SDE) of the form
	\begin{align} \label{eq: SDE} 
		\rd \Y_t = b (\Y_t) \vd t + \sigma (\Y_t) \vd W_t, \quad \text{where \(W\) is a Brownian motion},
	\end{align} 
	with sufficiently smooth coefficients \(b\) and \(\sigma\), and a bank account process \(\Y^0_t = e^{rt}\) with a deterministic interest rate \(r \in \bR\). 
	For a given payoff function \(h \in C_b (\bR)\) and a finite time horizon \(T > 0\), let \(u\) be the classical solution to the fundamental pricing PDE 
	\begin{align} \label{eq: fundamental PDE SDE model}
		\begin{cases}
			u_t (t, x) 
			+ \frac{\sigma^2 (x)}{2} u_{xx} (t, x)
			+ rx\, u_x (t, x) 
			- r u(t, x) = 0, & (t, x) \in [0, T) \times \bR,
			\\[2mm]
			u (T, x) = h (x),& \ x \in \bR.
		\end{cases} 
	\end{align}
	In this SDE framework it is well-known that, under suitable regularity and growth assumptions, \(u (0, \Y_0)\) is the minimal hedging capital of the European contingent claim with payoff \(h (\Y_T)\), and that the payoff can be realized through the hedging strategy
	\begin{equation}
		H^{(0)}_t = \frac{u (t, \Y_t) - u_x (t, \Y_t) \Y_t}{\S^0_t}, \quad H^{(1)}_t = u_x (t, \Y_t).
	\end{equation}
	We refer to \cite{HS_00, R_13} for derivations under suitable regularity assumptions on the coefficients, and \cite[Section~2.2.3]{CJY}, \cite[Section~10.3]{pas_11}, or \cite[Section~VII.4.c]{shir} for textbook treatments.
	At this point, we also like to mention the papers
		\cite{BayraktarXing2010,
			Cetin2018,
			CetinLarsen2023,
			EkstromTysk2009}
		that investigate the possible non-uniqueness issues of the fundamental pricing PDE for unbounded payoff functions.
	
	In this paper, we investigate the PDE methodology in the context of market models that remain Markovian but no longer admit an SDE representation of the form \eqref{eq: SDE}. 
	Unlike SDE solutions, price dynamics in such markets can exhibit a wide array of singular features like partially reflecting thresholds (skewness, see \cite{Lejay2006}), spending a positive amount of time at some point(s) (stickiness, see \cite{EngPes, salins2017markovprocesses}), and fractal slowdowns (see \cite[Section~8]{ankirchner2020_a_functional_limit}). 
	More specifically, we consider a financial market model with a single risky asset whose price \(\Y\) follows a
	\emph{regular continuous strong Markov process}
	(alternatively called
	\emph{diffusion in the sense of It\^o--McKean} \cite{itokean74}),
	and with the same bank account dynamics as in the above SDE model, that is, \(\Y^0_t = e^{rt}\) for a constant interest rate $r\in \IR $. 
	Our model is parametric in the sense that the market is uniquely characterized by three objects: the characteristics that determine the law of the diffusion $S$ (its scale function \(\s\) and speed measure~\(\m\)), and the interest rate $r$. 
	For concreteness, we focus on diffusions with open state space \(J\), such as \(J = (0, \infty), J = \bR\), or \(J = (0, 1)\), and we assume that the scale function \(\s\) and its inverse \(\s^{-1}\) are dc functions (i.e., the difference of two convex functions). 
	These assumptions imply that \(\Y\) is a semimartingale (see \cite[Section~5]{CinJPrSha}), which aligns with the usual paradigms of financial modeling. 
	
		Our main contributions are as follows.
		First we derive what we call the \emph{hedging PDE}
		(which is, in general, a non-standard PDE)
		--- a certain evolution equation related to an \emph{auxiliary diffusion} whose scale and speed depend on the market characteristics $(s,m,r)$.
		We emphasize that even its precise form is a novel result in settings without classical SDE dynamics of the type~\eqref{eq: SDE}.
		We separately discuss both the existence and the uniqueness of what we call ``good'' solutions to this equation.
		For the uniqueness we require an additional assumption,
		which is essential in the sense that we can construct examples with multiple good solutions when that assumption is not satisfied.
		The hedging PDE always yields a hedging strategy, but in general it may not necessarily achieve the minimal hedging capital. A sufficient condition for achieving the minimal hedging capital is the existence of an \emph{equivalent local martingale measure (ELMM)} for the discounted risky asset, which is equivalent to the classical \emph{no free lunch with vanishing risk (NFLVR)} no-arbitrage condition by the fundamental theorem of asset pricing of Delbaen and Schachermayer \cite{DS}.
		We provide a characterization of NFLVR in terms of the model parameters $(s,m,r)$
		and show that under any ELMM the {\em un}discounted price process follows the law of the auxiliary diffusion that we mentioned above in the context of the hedging PDE. 

		Our theoretical findings are complemented by numerical illustrations. We test our methodology on diffusion models featuring two skew-sticky interfaces, covering various parameter configurations: outward-pointing skew, inward-pointing skew, and skew points with the same direction. We consider a bear spread option
        and choose the interest rate $r$ so that the first two models admit an ELMM; the third model, however, fails to satisfy the ELMM condition for any interest rate due to its skew structure. We implement our hedging approach on these three models, together with the Bachelier model as a benchmark. Despite numerical instabilities arising from our approximation methods --- finite difference solutions of the hedging PDE and their delta fields --- the hedging methodology proves to be consistent: the tracking error decreases as the number of portfolio rebalancing instances increases and also as the precision of the finite-difference approximations increases. This holds irrespective of whether the market admits an ELMM.
	
	\medskip
	
	We end this introduction with comments on related literature. In the context of the hedging problem for diffusion market models with irregular coefficients, we are only aware of the recent paper \cite{anagnostakis2025pricing} that studies a sticky Black--Scholes model without interest rate (this model satisfies NFLVR only if the constant short rate is null). The paper establishes the pricing PDE for its model and provides numerical illustrations.
	
	Our work appears to be the first systematic approach for the pricing and hedging problem in the general setting with singular features, even in the case of zero interest rate $r=0 $.
	The NFLVR condition for general diffusion market models with $r=0 $ has been investigated in the recent papers \cite{CU_FS_25, CU_AAP_25}. 
	To our knowledge, we provide the first results in the presence of interest rates.
	In this connection, it is worth noting that our setting with $r\ne0$ does not reduce to it with $r=0$ because, in general, the discounted price $S/S^0$ loses the time homogeneity.
		We, finally, highlight that the structure of the ELMM as a diffusion in the sense of It\^o--McKean with particular scale and speed characteristics appears to be a nontrivial and novel observation.

	\subsection{Paper outline}
	
	Section~\ref{sec: hedging_PDE} presents our market model along with the main results of this paper, which form the theoretical core of our analysis.
	Section~\ref{sec: examples} presents examples of general diffusion markets that exhibit various features of interest: skew-sticky thresholds
	and accessible boundaries for the hedging PDE (exit, regular).
	Section~\ref{sec: numexp} presents numerical experiments illustrating our main results on the examples from Section~\ref{sec: examples}, demonstrating the practical feasibility and the good statistical properties of our approach.
	Section~\ref{sec: proofs} contains the proofs of the results presented in Section~\ref{sec: hedging_PDE}.

	\subsection{Notation}
	
	Throughout the paper we adopt the following notation.
	For a function $f \colon \IR \to \IR $, its right- and left-hand derivatives are denoted by
	\begin{equation}
		f'_+(x) := \lim_{h\searrow 0} \frac{f(x+h)-f(x)}{h} \quad \text{and} \quad 
		f'_-(x) := \lim_{h\searrow 0} \frac{f(x)-f(x-h)}{h},
	\end{equation}
	respectively, whenever the limits exist.
	For the derivative, we simply write $f'(x)$
	whenever both $f'_+(x)$ and $f'_-(x)$ exist and are equal.
	We recall that, for a
	\emph{dc function}
	(difference of two convex functions) \(f\colon \mathbb{R} \to \mathbb{R}\), its right-hand derivative \(f'_+\) exists and is RCLL (right continuous with left-hand limits),
	its left-hand derivative \(f'_-\) exists and is LCRL (left continuous with right-hand limits),
	and it holds
	\[
	f'_-(x) = (f'_+)(x-)
	\quad\text{and}\quad
	f'_+(x) = (f'_-)(x+).
	\]
	Further, let \(u\colon [0,T] \times J \to \mathbb{R}\) be a real-valued function of time and space.
	\begin{enumerate}
		\item
		The time derivative is denoted by
		\[
		\frac{\partial u}{\partial t} (t,x) 
		:=
		\begin{cases}
			\displaystyle
			\lim_{h \to 0} \frac{u(t+h,x) - u(t,x)}{h}, & (t, x) \in (0, T) \times J,
			\\[3mm]
			\displaystyle
			\lim_{h \searrow 0} \frac{u(h,x) - u(0,x)}{h}, &
			(t,x) \in \{0\}\times J, 
		\end{cases} 
		\]
		whenever the limit exists.
		We do not need the time derivative on
		$\{T\}\times J$.
		
		Consistently with~\eqref{eq: fundamental PDE SDE model}, we sometimes write $u_t $ in lieu of $\partial u / \partial t$.
		
		\item
		For a strictly increasing continuous function
		\(F\colon J \to \mathbb{R}\), the right- and left-hand Stieltjes derivatives with respect to \(F\) in the space argument are defined as
		\[
		\frac{\partial^+ u}{\partial F} (t,x) := \lim_{h\searrow 0} \frac{u(t,x+h) - u(t,x)}{F(x+h)-F(x)}
		\]
		and
		\[
		\frac{\partial^- u}{\partial F} (t,x) := \lim_{h\searrow 0} \frac{u(t,x) - u(t,x-h)}{F(x)-F(x-h)},
		\]
		respectively, whenever the limits exist.
		As for the functions of one variable, we write
		$\partial u / \partial F$
		and speak about the two-sided Stieltjes derivative
		in the space argument
		at some point $(t,x)$
		whenever $\partial^+ u/\partial F$
		and $\partial^- u/\partial F$
		coincide at $(t,x)$.
		
		Finally, in the case $F=\on{id}$, we write
		$\partial^+_x u$ in lieu of
		$\partial^+ u / \partial x$,
		and
		$\partial^-_x u$ in lieu of
		$\partial^{-} u / \partial x$,
		and,
		consistently with~\eqref{eq: fundamental PDE SDE model},
		$u_x$ in lieu of
		the two-sided derivative
		$\partial u / \partial x$.
	\end{enumerate}

	\section{Setting, main results, and examples}
	\label{sec: hedging_PDE}
	
	\subsection{Market model}
	\label{subsec: market}
	Take an open interval \(J = (\l, \r)\) with \(- \infty \leq \l < \r \leq \infty\), and let \(\Y = (\Y_t)_{t \geq 0}\) be a regular continuous strong Markov process (\emph{general diffusion} for short)
with state space $J$
	on a filtered probability space \((\Omega, \cF, (\cF_t)_{t \geq 0}, P)\) with a right-continuous filtration; see \cite[Definition~V.45.1]{RW2} for the definition on the canonical path space.
We emphasize that the strong Markov property is understood with respect to the filtration $(\cF_t)_{t \geq 0}$.
We further assume that \(S_0 = s_0\) for an arbitrary \(s_0 \in J\).
	A quite complete overview on the theory of general diffusions can be found in the seminal monograph \cite{itokean74} by It\^o and McKean.
	Shorter textbook introductions are given in \cite{freedman,kallenberg,RY,RW2}.
	
	It is well-known (\cite{itokean74}) that the law of \(\Y\) is uniquely characterized by two deterministic objects, the {\em scale function} \(\s \colon J \to \bR\) and the {\em speed measure} \(\m\).
	The scale function is a continuous increasing function on $J $, the speed measure is a measure on \((J, \mathcal{B}(J))\) such that \(\m ([a, b]) \in (0, \infty)\) for all \(a, b \in J, a < b\), and they are defined up to an affine transformation in the sense that $(b + s/a, a\, m) $ and $(s,m) $, with $a >0 $ and $b\in \IR $, define the same law. 
	Scale and speed are called the {\em diffusion characteristics}. 
	By Feller's test for explosion (\cite[Proposition~16.43]{breiman1968probability}), and since the diffusion is conservative (no killing is allowed), the assumption that \(J\) is open is equivalent to 
	\begin{equation} \label{eq: feller test} \begin{split}
			b \in \{\l, \r\}, \, |\s (b)| < \infty \ \implies \ \int_I \,  | \s &(x) - \s (b) | \, \m (\rd x) = \infty, \\& \forall \, I = (c, d) \subset J \text{ with \(b \in \{c, d\}\)}.
		\end{split} 
	\end{equation} 
	Throughout this paper, we also impose the following assumption on the scale function:
	
	\begin{SA} \label{SA}
		The scale function \(\s\) and its inverse \(\s^{-1}\) are both dc functions (i.e., the difference of two convex functions).
	\end{SA}
	
	\begin{FM}
		We are now working with a financial market that contains one risky asset~\(\Y = (\Y_t)_{t \geq 0}\) and a bank account process \(\Y^0_t := e^{rt}\) with a deterministic interest rate \(r \in \bR\). 
	\end{FM} 
	
	\begin{remark} \label{rem. SA}
		(a) Standing Assumption~\ref{SA} implies that the diffusion \(\Y\) is a semimartingale; see \cite[Section~5]{CinJPrSha}. This is considered to be a minimal assumption for financial modeling. 
		
		\smallskip
		(b) To put Standing Assumption~\ref{SA} into the context of finance, it can be considered as a mild structural assumption. Indeed, in the zero interest rate regime (i.e., when \(r = 0\)), it is even implied by the very weak ``no strong arbitrage'' (sometimes also called \(\textit{NA}_+\)) no arbitrage condition; see \cite[Corollary~4.3]{ACU_25_arxiv}. 
		
		It is worth noting that it may happen that a financial market model satisfies the even weaker no arbitrage condition ``no increasing profit'' while \(\s\) is not a dc function; see \cite[Examples~6.3, 6.4]{ACU_25_arxiv}. 
		
		\smallskip
		(c) The assumption that \(\s^{-1}\) is a dc function entails that the one-sided derivatives \(\s'_+\) and \(\s'_-\) are strictly positive. Indeed, as dc functions are locally Lipschitz continuous (as convex functions have this property), for every compact set \(K \subset J\), we obtain that 
			\[
			| x - y | \leq L | \s (x) - \s (y) |,
			\]
			for all \(x, y \in K\), where \(L > 0\) is the Lipschitz constant of \(\s^{-1}\) on the compact set \(\s (K)\).
			This proves that \(\s'_\pm \geq 1 / L\) on the interior of \(K\).
	\end{remark} 
	
	\subsection{Main results}
	Before we state our main results in detail, let us provide a short overview, meant to be a guideline for the reader. Our first main result, Theorem~\ref{theo: main1}, provides what we term the fundamental hedging PDE for European contingent claims in the market model introduced above. This is the backward evolution equation 
	given by
	\begin{equation} \label{eq: hedging PDE}
		\sqbraces{ \frac\partial{\partial t} + \frac{1}{2} \frac{\partial}{\partial\om} \frac{\partial^-}{\partial\g} - r } u(t,x) = 0, \quad 
		u(T,x) = h(x),
	\end{equation}
	where $h$ is the claim's payoff, and $(\g,\om)$ are characteristics of an auxiliary general diffusion constructed from the market parameters $(s,m,r)$.
	The theorem states, in particular,
	that a stochastic Feynman--Kac-type representation based on the auxiliary diffusion with characteristics $(\g,\om)$
	provides a so-called good solution to the hedging PDE
	and yields a hedging strategy for European contingent claims, while the solution itself gives the respective hedging capital of that strategy.
	In Theorem~\ref{theo: uniqueness pricing eq} we ensure
	the uniqueness for~\eqref{eq: hedging PDE}
	under a certain additional condition
	(the latter cannot be dropped).
	
	Our next main result relates this hedging strategy to the minimal hedging capital, i.e., the lowest superhedging price. 
	More precisely, in Theorem~\ref{theo: ELMM} we first characterize the NFLVR no-arbitrage notion through necessary and sufficient conditions on $ (s,m,r)$. Adapting martingale arguments, under NFLVR, Theorem~\ref{thm: identification} shows that the solution to~\eqref{eq: hedging PDE} yields the minimal hedging capital, which is then the no-arbitrage price of the claim. In this case, we also call~\eqref{eq: hedging PDE} the fundamental pricing PDE. 
	In Section~\ref{sec: examples} we discuss explicit examples in which case \eqref{eq: hedging PDE} either provides the minimal hedging capital or a hedging capital that is not minimal
	(for the latter, necessarily, NFLVR fails).
	In the latter case we call~\eqref{eq: hedging PDE} the non-minimal hedging PDE.	
	The proofs of our main results are given in Section~\ref{sec: proofs} below.
	
	We now start our main program, introducing the auxiliary characteristics \((\g, \om)\). The first building block of our construction is the measure
	\[
	\hm (\rd x) := - r x \s'_+ (x) \, \m (\rd x) \quad \text{on } \mathcal{B} (J), 
	\] 
	which is a locally finite signed measure. 
	From now on, we work under the following:
	
	\begin{SA} \label{SA: nu cond}
		\(\hm (\{x\}) > - \frac{1}{2}\) for all \(x \in J\).
	\end{SA}
	
	\begin{remark}\label{rem:220826a1}
		Although Standing Assumption~\ref{SA: nu cond} has a technical flavor at this point, it can be related to no-arbitrage theory: Thanks to \cite[Theorem~3.1]{ACU_25_arxiv}, it is implied by the very weak ``no increasing profit'' (NIP) condition. To provide the details, using item (ii) from  \cite[Theorem~3.1]{ACU_25_arxiv}, for every \(x \in J\), NIP implies that 
		\begin{align} \label{eq: (ii) from weak arbitrage paper}
			r x \, \m ( \{ x \} ) = \tfrac{1}{2} (\s^{-1})'' ( \{ \s (x) \} ), 
		\end{align} 
		where \((\s^{-1})''\) denotes the second derivative measure of the inverse scale function \(\s^{-1}\). By Standing Assumption~\ref{SA} (more precisely, see Remark~\ref{rem. SA}~(c)), we obtain that 
		\begin{align} \label{eq: second derivative estimate}
			(\s^{-1})'' ( \{ \s (x) \} ) = \frac{1}{\s'_+ ( x )} - \frac{1}{\s'_- ( x)} < \frac{1}{\s'_+ ( x )}. 
		\end{align} 
		Putting \eqref{eq: (ii) from weak arbitrage paper} and \eqref{eq: second derivative estimate} together yields that 
		\[
		\nu ( \{ x \} ) = - r x \s'_+ (x) \m (\{ x \}) = - \frac{1}{2} \s'_+ (x) (\s^{-1})'' ( \{ \s (x) \} ) > - \frac{1}{2}, 
		\] 
		which means that Standing Assumption~\ref{SA: nu cond} holds.
	\end{remark}
	Let \(\xi \in J\) be an arbitrary point in the state space.
	Under the above standing assumption, the integral equation
	\[
	\G (x) = \begin{cases}1 + \int_{[\xi, x]} 2 \G(z-) \, \hm (\rd z), & x \geq \xi, \\ 1 - \int_{(x, \xi)} 2 \G (z-) \, \hm (\rd z), & x < \xi, \end{cases} 
	\] 
	has a unique strictly positive RCLL solution. Indeed, the well-known formula for the stochastic exponential (or see \cite[p.~159]{ES1991}) yields that 
	\begin{align*}
		\G (x) = \begin{cases} \exp \{ 2 \hm ([\xi, x]) \} \prod_{\xi \leq y \leq x} (1 + 2\hm (\{y\})) \exp \{- 2\hm (\{y\}) \}, & x \geq \xi, \\\exp \{-2\hm ((x, \xi) ) \} \prod_{x < y < \xi} (1 + 2\hm (\{y\}))^{-1} \exp \{ 2 \hm (\{y\}) \}, & x < \xi. \end{cases} 
	\end{align*} 
	Notice that \(\G\) is locally of finite variation, which follows from the integral representation.
	
	With these objects at hand, we now define the fundamental objects that determine the hedging PDE:
	\begin{itemize}
		\item the function
		\[
		\g (x) := \int_\xi^x \G (y) \vd y, \quad x \in J, 
		\] 
		which is a strictly increasing dc function. Clearly, we also have 
		\[
		\g'_+ (x) = \G (x) \quad \text{ and }\quad \g'_- (x) = \G (x-),
		\]
		and the second-derivative measure of \(\g\) is given by \(2\G(x-) \, \nu (\rd x)\); and
		
		\item the measures
		\begin{equation}
			\label{eq_mutilde_mubar}
			\overline{\m} (\rd x) := \frac{\s'_+ (x)}{\G (x)} \, \m (\rd x) \ \ \text{on } \mathcal{B} (J), 
			\quad \tm := \overline{\m} \circ \g^{-1} \ \ \text{on } \mathcal{B} (G (J)).
		\end{equation} 
	\end{itemize}
	
	We form the interval \(\J \supset \I\) from \(\I\) by adding accessible boundary points for the characteristics
		\((\g , \om )\),
		where the accessibility is determined by Feller's test for explosion (see \cite[Proposition~16.43]{breiman1968probability}). Furthermore, we extend \(\om\) to \(\J\), choosing the
		(instantaneous or slow) reflection or absorption in regular boundary points in an arbitrary manner. Using the extension of \(\om\), we also extend \(\tm\) in the obvious way to \(\mathcal{B} (G (J^*))\).

	\begin{notation}[Canonical Setup]
		Let \(\bW\) be the space of continuous functions \(\bR_+ \to \bR\) and \(\X = (\X_t)_{t \geq 0}\) be the coordinate process on \(\bW\), i.e., \(X_t (\omega) = \omega (t)\) for all \(\omega \in \bW, t \in \bR_+\). Define \(\cW := \sigma (\X_t, t \geq 0)\) and \(\cW_t := \sigma (\X_s, s \leq t)\). The filtered space \((\bW, \cW, (\cW_{t+})_{t \geq 0})\) is called the canonical setup.  
	\end{notation}

	By \cite[Theorem~33.9]{kallenberg}, there exist two families \((J^* \ni x \mapsto Q_x)\) and \((\g(J^*) \ni x \mapsto P_x)\) of probability measures on the canonical setup that are general diffusions as defined in \cite[Definition~V.45.1]{RW2} with characteristics \((G, \om)\) and \((\on{id}, \tm)\), 
	respectively. 
	It is useful to note that the diffusions are related by the formula 
	\begin{equation}\label{eq:220826a3}
		Q_x = P_{\g (x)} \circ  \big(\g^{-1} (\X) \big)^{-1}, \quad x \in J. 
	\end{equation}
	As we will see in Theorem~\ref{theo: ELMM} below, the diffusion \((x \mapsto Q_x)\) is a natural candidate for a canonical equivalent local martingale measure (ELMM) for our market model, provided an ELMM exists. We call it \emph{auxiliary diffusion} in the sequel.
    In what follows, $T>0$ denotes a given finite time horizon.
	
	\begin{definition} \label{def: good solution}
		We call a function \(u \colon [0, T] \times J \to \bR\) a {\em good solution} to the PDE 
		\begin{align} \label{eq: backward PDE}
			\frac{\partial u}{\partial t} + \frac{1}{2} \frac{\partial}{\partial \om} \frac{\partial^- u }{\partial \g} - ru = 0 \quad \text{ on }
			[0, T) \times J, 		
		\end{align} 
		if it has the following properties: 
		\begin{enumerate}
			\item[(a)] \(u\) is continuous on \([0, T) \times J\); 
			\item[(b)] the time-derivative \(u_t\)
			exists as a continuous function on \([0, T) \times J\); 
			\item[(c)] the left-hand space derivative
			\(\partial^-_x u\)
			exists as a locally bounded function on \([0, T) \times J\) that is continuous in the first variable and LCRL in the second variable, when the other variable remains fixed; 
			\item[(d)] for all \(t \in [0, T), y, z \in J, y > z\), 
			\begin{align} \label{eq: main PDE}
				\frac{\partial^- u}{\partial\g} (t, y)
				-
				\frac{\partial^- u}{\partial\g} (t, z)
				&=
				\int_{[z, y)}
				2\, \big(
				r u (t, v) - u_t (t, v)
				\big)
				\,
				\om (\rd v).
			\end{align} 
			Notice that 
			\[
			\frac{\partial^- u}{\partial \g}(t,x) = \frac{\partial^-_x u (t,x)}{G'_-(x)} = \frac{\partial^-_x u (t,x)}{g (x -)}.  
			\]
		\end{enumerate}
	\end{definition}

	We define 
	\begin{align*}
		L^\infty (\overline{\m}) := \Big\{
		f \colon \J \to \bR
		&
		\text{ Borel-measurable such that }
		\\
		&
		\exists
		\text{ a constant }
		c<\infty
		\text{ with }
		\om \, (|f| > c) = 0
		\Big\},
	\end{align*}
	and fix a function \(h \in L^\infty (\overline{\m})\). Consider a contingent claim \(C\) with payoff \(h (\Y_T)\) at time $T$. We define the {\em fundamental value function} of this contingent claim by
	\begin{equation}\label{eq:240826a1}
	u (t, x) := E^{Q_x} \big[ e^{- r (T - t)} h (\X_{T - t}) \big], \quad (t, x) \in [0, T] \times J. 
	\end{equation}
	The following theorem explains the connection of \(u\) to the pricing and hedging problem of the contingent claim~\(C\).
	
	\begin{theorem} \label{theo: main1}
		\begin{enumerate} 
			\item[\textup{(a)}] The fundamental value function \(u\) is a good solution to \eqref{eq: backward PDE} with \(u (T, x) = h (x)\) for \(x \in J\).
			\item[\textup{(b)}] Let \(v\) be a good solution to \eqref{eq: backward PDE}. Then, almost surely, for all \(t \in [0, T)\),
			\begin{equation} \label{eq: 1st hedge}
				\begin{split} 
					v (t,  \Y_t ) = v (0, \Y_0) 
					&+ \int_0^t \frac{v (s,\Y_s)  - \partial^-_x v (s, \Y_s) \Y_s}{\Y^0_s} \vd  \Y^0_s 
					+ \int_0^t \partial^-_x v (s, \Y_s) \vd  \Y_s,
				\end{split} 
			\end{equation}
			where \([0, T) \ni s \mapsto v (s, S_s)\) and \([0, T) \ni s \mapsto \partial^-_x v (s, S_s)\) are locally bounded predictable processes.
			\item[\textup{(c)}] Moreover, if \(h \in C_b (\J)\), then almost surely
			\begin{equation} \label{eq: 2nd hedge}
				\begin{split}
					h (\Y_T) = E^{Q_{\Y_0}} \big[ &e^{- rT} h ( \X_T) \big] \\&+ \lim_{t \nearrow T} \Big( \int_0^{t} \frac{u (s, \Y_s) - \partial^-_x u (s, \Y_s) \Y_s}{\Y^ 0_s} \vd  \Y^0_s 
					+ \int_{0}^{t} \partial^-_x u (s, \Y_s) \vd  \Y_s \Big),
				\end{split} 
			\end{equation} 
			where it is implicit that the limit on the
			right-hand side exists.
		\end{enumerate} 
	\end{theorem}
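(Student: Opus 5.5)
For part (a), I would first move everything to natural scale. By \eqref{eq:220826a3}, \(u(t,x) = w(t, \g(x))\), where
\[
w(t,y) := E^{P_y}\big[e^{-r(T-t)} h(X_{T-t})\big]
\]
is the discounted transition semigroup of the diffusion on natural scale with speed \(\tm\). I would then use the spectral/semigroup theory of It\^o--McKean diffusions. The transition density \(p(\tau,y,z)\) with respect to \(\tm\) is jointly continuous, smooth in \(\tau>0\), and satisfies \(\partial_\tau p = \tfrac12 \frac{d}{d\tm}\frac{d^-}{dy} p\) in the generalized sense. Since \(h\in L^\infty(\om)\), which is the same as \(h\circ\g^{-1}\in L^\infty(\tm)\), the function
\[
w(t,\cdot) = e^{-r(T-t)}\int p(T-t,\cdot,z)\, h(\g^{-1}(z))\,\tm(\rd z)
\]
lies in the domain of the generator for each \(t<T\), by analyticity of the semigroup on \(L^2\) or by standard heat kernel bounds for \(\tau\) bounded away from \(0\).

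This gives properties (a) and (b) of Definition~\ref{def: good solution}, the latter via local uniform bounds on \(\partial_\tau p\). It also gives the generator identity
\[
\frac{\partial^- w}{\partial y}(t,y) - \frac{\partial^- w}{\partial y}(t,z) = \int_{[z,y)} 2(rw - w_t)\,\tm(\rd v).
\]
Changing variables \(y=\g(x)\) turns \(\partial^- w/\partial y\) into \(\partial^- u/\partial\g\) and \(\tm\) into \(\om\), which is exactly \eqref{eq: main PDE}. Property (c) then follows from \(\partial^-_x u = \G(x-)\,\partial^- u/\partial \g\): the right-hand side of \eqref{eq: main PDE} is LCRL in \(y\), continuous in \(t\), and locally bounded. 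The terminal condition is just the definition at \(t=T\).

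For part (b), I would apply a generalized It\^o formula to \(e^{-rt}v(t,S_t)\). Since \(s\) and \(s^{-1}\) are dc, \(S\) is a semimartingale, and its occupation/local time structure is governed by \(s\) and \(m\). The cleanest route is to write \(v(t,x) = \phi(t,\g(x))\), where \(\phi(t,\cdot)\) is a difference of convex functions with second derivative measure \(2(r\phi - \phi_t)\,\tm\). I would then apply an Itô--Tanaka formula with time dependence. Local uniformity in \(t\), justified by the continuity assumptions (b) and (c), allows an approximation argument over time grids. This gives
\[
\rd\, v(t,S_t) = \big(v_t + \text{second-order term}\big)\,\rd t + \partial^-_x v\,\rd S_t .
\]

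The main obstacle is computing the second-order term in the dynamics of \(S\). It arises from the local-time measure \(\tfrac12\int L^a_t\, v''(t,\rd a)\), which involves \(\G\), \(\nu\), and the real-world \((s,m)\). The task is to show that this term combines with \(v_t\,\rd t\) into \(r(v - \partial^-_x v\, S)\,\rd t\). I would use the occupation formula \(\int_0^t f(S_u)\,\rd u = \int L^{s,a}_t f(a)\, m(\rd a)\) in scale variables and the semimartingale decomposition of \(S = s^{-1}(s(S))\). The latter has drift \(\tfrac12\int L\,\rd (s^{-1})''\) in local-time form, and the term relating it to \(rx\,\m\) is precisely where \(\nu\), and hence \(\G\), enters. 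Equating these terms should confirm that the construction of \((\g,\om)\) makes the drift of \(e^{-rt}v(t,S_t)\) vanish apart from \(\partial_x^- v\,\rd(e^{-rt}S_t)\). Atoms of \(m\) (stickiness) require the \(-\tfrac12 < \nu(\{x\})\) correction, and checking these is the delicate point. Rewriting the result gives \eqref{eq: 1st hedge}. Predictability and local boundedness come from continuity in time together with LCRL in space composed with the continuous process \(S\).

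For part (c), I would take \(v=u\) in \eqref{eq: 1st hedge} and let \(t\nearrow T\). Because \(h\in C_b(\J)\), the Feller property of the auxiliary diffusion gives \(u(t,y)\to h(x)\) as \((t,y)\to(T,x)\), locally uniformly. Hence \(u(t,S_t)\to h(S_T)\), and the limit of the stochastic integrals exists and equals \(h(S_T) - u(0,S_0)\).
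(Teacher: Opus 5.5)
Your overall architecture matches the paper's: natural scale plus McKean-type regularity for (a), a time-dependent It\^o--Tanaka formula plus the occupation-time formula for (b), and Feller continuity of the auxiliary diffusion for (c). Part (c) is fine as stated. However, the two steps that carry the real work are left open.

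\emph{Part (a).} Identity \eqref{eq: main PDE} only controls differences $\frac{\partial^- u}{\partial\g}(t,y)-\frac{\partial^- u}{\partial\g}(t,z)$. So continuity in $t$ of its right-hand side gives neither continuity in $t$ nor local boundedness of $\partial^-_x u$ itself. You need a base point at which $t\mapsto\partial^-_x u(t,\cdot)$ is continuous. The paper gets this from Dynkin's formula with the Green function of $[a,b]$. Equivalently, integrating your generator identity over $y\in[a,b]$ gives
\[
\frac{\partial^- w}{\partial y}(t,a)=\frac{w(t,b)-w(t,a)}{b-a}-\frac{2}{b-a}\int_{[a,b)}(b-v)\,\big(rw-w_t\big)(t,v)\,\tm(\rd v),
\]
which is continuous in $t$ because $w$ and $w_t$ are. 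Separately, your $L^2$-analyticity argument does not apply as stated: $h\circ\g^{-1}\in L^\infty(\tm)$ need not lie in $L^2(\tm)$. The paper simply invokes \cite[Corollaries~4.3 and 4.4]{MK_56}, which cover bounded $h$.

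\emph{Part (b).} The decisive cancellation is only announced, and the route you sketch would not produce it. The real-world drift $\frac12\int L\,\rd(\s^{-1})''$ of $\Y$ depends on $(\s,\m)$ only and is in general not related to $rx\,\m$; such relations come only from extra no-arbitrage conditions (cf.\ Remark~\ref{rem:220826a1}). It is also not needed, because $\rd\Y_t$ stays as the integrator. What you need is:
\begin{enumerate}
\item the local-time transfer $L^{\s(a)}_t(\s(\Y))=\s'_+(a)\,L^a_t(\Y)$;
\item the diffusion occupation formula;
\item the identities $\hm=-rx\s'_+\m$, $\om=(\s'_+/\G)\m$ and $\rd\g'_-=2\G(x-)\,\hm(\rd x)$.
\end{enumerate}
Since $\partial^-_x v(t,x)=\G(x-)\frac{\partial^- v}{\partial\g}(t,x)$ is a product of left-continuous BV functions, the Stieltjes measure of $x\mapsto\partial^-_x v(t,x)$ is
\[
\frac{\partial^- v}{\partial\g}(t,x)\,2\G(x-)\,\hm(\rd x)+\G(x)\,2\big(rv-v_t\big)(t,x)\,\om(\rd x)=2\s'_+(x)\big(rv-v_t-rx\,\partial^-_x v\big)(t,x)\,\m(\rd x).
\]
Hence the local-time term equals $(rv-v_t-r\Y_t\,\partial^-_x v)(t,\Y_t)\,\rd t$, atoms included, and \eqref{eq: 1st hedge} follows. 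In particular, $\hm(\{x\})>-\frac12$ enters only through $\G>0$, not as a separate correction at sticky points.

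The paper splits the computation differently. It applies the It\^o formula of \cite{Wil_18} to $\overline{v}(t,y)=e^{r(T-t)}v(t,\g^{-1}(y))$ along $\g(\Y)$, whose second-derivative measure is $-2\overline{v}_t\,\tm$. So the local-time term is exactly $-\overline{v}_t\,\rd t$ and $\rd\,\overline{v}(t,\g(\Y_t))=\partial^-_x\overline{v}(t,\g(\Y_t))\,\rd\g(\Y_t)$. Separately, It\^o--Tanaka for $\g$ gives $\rd\g(\Y_t)=\g'_-(\Y_t)(\rd\Y_t-r\Y_t\,\rd t)$. This avoids product-rule bookkeeping at atoms and matches the hypotheses of \cite{Wil_18} directly, instead of relying on your unspecified approximation over time grids.
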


	\begin{discussion} \label{diss: main result} 
		(i) As we discuss below in more detail, for \(h \in C_b (J^*)\), Theorem~\ref{theo: main1} establishes that, for a European contingent claim with payoff \(h(\Y_T)\),
		\begin{align} \label{eq: main hedging strategy}
			s \mapsto \Big( \frac{u (s, \Y_s) - \partial^-_x u (s, \Y_s) \Y_s}{\Y^ 0_s}, \partial^-_x u (s, \Y_s) \Big)
		\end{align}
		constitutes a hedging strategy with initial cost
		\[
		u(0,\Y_0) = E^{Q_{\Y_0}} \big[ e^{- rT} h ( \X_T) \big].
		\]
		Notably, the boundary classification of the family \((J^* \ni x \mapsto Q_x)\) is irrelevant for this result. In particular, as illustrated in Sections~\ref{ssec: 3d_Bessel} and~\ref{ssec: exit}, the case \(J \subsetneq J^*\) can occur. 
        In this case, different types of boundary behavior of \((J^* \ni x \mapsto Q_x)\)
        can yield different fundamental value functions,
        but they all yield well-defined hedges.
		To gain intuition for this observation, recall that \(\Y\) is \(J\)-valued and that the strategy in \eqref{eq: main hedging strategy} involves evaluating \(u\) and \(\partial^-_x u\) only at \(\Y_s\). Consequently, informally, points in \(J^* \setminus J\) do not affect the hedging mechanism. Rigorously, in our proof this follows from the application of a generalized It\^o-type formula to \(u (\, \cdot \,, \Y)\), which requires properties of \(u\) solely on \([0, T] \times J\) rather than on \([0, T] \times J^*\).

		\smallskip 
		(ii) Relating Theorem~\ref{theo: main1} to the classical theory, for the classical SDE
		model~\eqref{eq: SDE}, the equation~\eqref{eq: backward PDE} translates to the pricing PDE~\eqref{eq: fundamental PDE SDE model}. To be more specific, for an arbitrary reference point \(\xi \in J\), the scale function \(\s\) and speed measure \(\m\) of the SDE model \eqref{eq: SDE} are given by 
		\begin{equation} \label{eq: ScSp SDE}
			\begin{split}
				\s (x) &= \int_\xi^x \exp \Big\{ - \int^y_\xi \frac{2 b (z)}{\sigma^2 (z)} \vd z \Big\} \vd y, \quad x \in J, 
				\\
				\m (\rd x) &= \frac{\rd x}{\s' (x) \sigma^2 (x)} \quad \text{on } \mathcal{B}(J), 
			\end{split} 
		\end{equation} 
		and consequently, the auxiliary ingredients \(g\) and \(\nu\) of the PDE~\eqref{eq: backward PDE} are given by 
		\[
		\nu (\rd x) = - \frac{rx}{\sigma^2 (x)} \vd x \quad \text{on } \mathcal{B} (J), 
		\] 
		and 
		\[
		\G (x) = \exp \Big\{ - \int^x_\xi \frac{2ry}{\sigma^2 (y)} \vd y \Big\}, \quad x \in J. 
		\] 
		Now, formally, we get that 
		\[
		\frac{1}{2} \frac{\partial}{\partial \om} \frac{\partial^-u }{\partial G} (t, x)= \frac{\sigma^2 (x) \G(x)}{2} \frac{\partial}{\partial x } \frac{u_x (t, x)}{\G(x)}
		= \frac{\sigma^2 (x)}{2}
		u_{xx} (t, x) + rx \,
		u_x (t, x).
		\]
		Hence, \eqref{eq: backward PDE} reformulates to 
		\[
		u_t (t, x) + \frac{\sigma^2(x)}{2} u_{xx} (t, x) 
		+ r x \, u_x (t, x) - r u (t, x) = 0, 
		\] 
		which recovers the PDE \eqref{eq: fundamental PDE SDE model}. 
		
		\smallskip 
		(iii) In the zero interest rate regime \(r = 0\), the pricing PDE \eqref{eq: backward PDE} boils down to the Kolmogorov PDE
		\[
		\frac{\partial u}{\partial t} + \frac{1}{2} \frac{\partial}{\partial \om} \frac{\partial^-u }{\partial x} = 0, 
		\] 
		which corresponds to the diffusion on natural scale with speed measure \(\rd \om = \s'_+ \, \rd\m\). 
	\end{discussion}
	
	We proceed with a study of uniqueness. 
	The following result shows that the pricing PDE \eqref{eq: backward PDE} satisfies uniqueness among bounded continuous good solutions under the assumption that \(\J = \I\). From the perspective of mathematical finance, the latter condition will come naturally, as it is entailed by the no-arbitrage notion NFLVR, which we discuss afterwards. 
	Before we state our uniqueness result, let us recall that, by Feller's test for explosion, the condition \(\J = \I\) is equivalent to the following: 
	\begin{equation} \label{eq: no explosion} 
		\begin{split}
			b \in \{ \l, \r \}, &\ | \g (b) | < \infty
			\\& \implies	\int_I \, | \g (x)  - \g (b) |\, \om (\rd x) = \infty, \quad \forall \, I = (c, d) \subset J \text{ with } b \in \{c, d\}. 
		\end{split}
	\end{equation} 
	
	\begin{theorem} \label{theo: uniqueness pricing eq}
		Assume that \(\J = \I\) (equivalently, condition \eqref{eq: no explosion}) and \(h \in C_b (\I)\). Then, \eqref{eq: backward PDE} has only one bounded good solution \(v\) that is continuous on \([0, T] \times J\) and satisfies \(v (T, x) = h (x)\) for \(x \in \I\). This good solution is given by the fundamental value function \(u\) in~\eqref{eq:240826a1}.
	\end{theorem}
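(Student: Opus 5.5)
The plan is a verification (Feynman--Kac) argument under the auxiliary diffusion. Existence is already settled: by Theorem~\ref{theo: main1}(a), \(u\) is a good solution with \(u(T,\cdot)=h\). Since \(h\) is bounded, \(u\) is bounded by \(e^{|r|T}\|h\|_\infty\). It remains to check that \(u\) is continuous on \([0,T]\times J\), including at \(t=T\). This follows from the Feller property of \((Q_x)\) on \(J=J^*\): the boundaries are inaccessible, so \(x\mapsto Q_x\) is weakly continuous. Together with continuity of paths and boundedness of \(h\in C_b(J)\), this gives \(E^{Q_x}[h(X_{T-t})]\to h(x_0)\) as \((t,x)\to(T,x_0)\).

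For uniqueness, fix a bounded good solution \(v\), continuous on \([0,T]\times J\), with \(v(T,\cdot)=h\), and fix \(x\in J\). Work under \(Q_x\) on the canonical space; \(X\) lives in \(J\) forever because \(J^*=J\). The key step is a generalized It\^o formula for \(e^{-rt}v(t,X_t)\) under \(Q_x\). Write \(Y=G(X)\), a diffusion on natural scale with speed \(\tm\). Set \(w(t,y):=v(t,G^{-1}(y))\). Then \(\partial^-_y w=\partial^- v/\partial G\), and property (d) says that \(\partial^-_y w(t,\cdot)\) has finite-variation derivative measure \(2(rv-v_t)\circ G^{-1}\,\tm\). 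One would then establish that
\[
e^{-rt}v(t,X_t)=v(0,x)+\int_0^t e^{-rs}\,\partial^-_y w(s,Y_s)\,\rd Y_s,\qquad t<T.
\]
The drift contributions cancel by the PDE. Concretely, the It\^o--Tanaka/occupation-time formula for the natural-scale diffusion gives local time \(L^y_t\) with \(\int f(Y_s)\,\rd s=\int f(y)L^y_t\,\tm(\rd y)\). The second-order term is then \(\tfrac12\int L^y_t\,\partial_y(\partial^-_y w)(\rd y)=\int_0^t (rv-v_t)(s,X_s)\,\rd s\), which cancels \(\int_0^t(v_t-rv)\,\rd s\). This is essentially the same It\^o-type formula that the paper uses for part (b) of Theorem~\ref{theo: main1}, applied here with \(S\) replaced by the auxiliary diffusion. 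In the plan I would invoke it by the same argument, with the time dependence handled by (b)--(c): \(v_t\) is continuous, and \(\partial^-_x v\) is continuous in \(t\) and locally bounded.

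Next, localize. Let \(\tau_n\) be the exit time of \(X\) from \([a_n,b_n]\uparrow J\), capped at \(T-1/n\). On \([0,\tau_n]\) the integrand is bounded by (c), so the stochastic integral is a true martingale. Hence \(v(0,x)=E^{Q_x}[e^{-r\tau_n}v(\tau_n,X_{\tau_n})]\). Since \(J^*=J\), \(X\) does not reach \(\{\alpha,\beta\}\) on \([0,T]\), so \(\tau_n\to T\) almost surely. By continuity of \(v\) on \([0,T]\times J\) and path continuity, \(v(\tau_n,X_{\tau_n})\to v(T,X_T)=h(X_T)\). Boundedness of \(v\) then lets dominated convergence give \(v(0,x)=E^{Q_x}[e^{-rT}h(X_T)]=u(0,x)\). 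The same argument started at time \(t\), using the Markov property and time-homogeneity, gives \(v(t,x)=u(t,x)\).

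The main obstacle is the rigorous It\^o formula for \(w(t,Y_t)\) with merely left-differentiable spatial derivative and time-dependence. I would first try mollifying in time, or a Riemann-sum argument in time with the spatial Tanaka formula applied at frozen times. Here I would use that \(\partial^-_y w\) is uniformly bounded on compacts, and that its derivative measure has a density against \(\tm\) which is continuous in \(t\), so the errors vanish. Everything else (boundary inaccessibility from \(J^*=J\), dominated convergence) is routine.
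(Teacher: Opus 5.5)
Your proposal is correct and follows essentially the same route as the paper. Both arguments apply a generalized space--time It\^o formula to \(e^{-rt}v(t_0+t,X_t)\) under \(Q_{x_0}\), which is legitimate because \(J^*=J\). Property (d) together with the occupation-time formula for the natural-scale diffusion \(G(X)\) then cancels the drift, so that \(e^{-rt}v(t_0+t,X_t)\) is a stochastic integral against the local martingale \(G(X)\). Finally one lets \(t\nearrow T-t_0\), using that \(v\) is bounded and continuous on \([0,T]\times J\).

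The differences are minor. The paper cites a ready-made It\^o-type formula rather than deriving it by your frozen-time Riemann-sum argument. It also obtains the true martingale property directly from boundedness of \(v\) (a bounded local martingale is a martingale), rather than through your exit-time localization.
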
 

\begin{example}\label{ex:240826a3}
We show that the assumption
$\J=\I$
in Theorem~\ref{theo: uniqueness pricing eq}
cannot be dropped.
Take any specific example with $\J\ne\I$.
Define $(\J\ni x\mapsto Q_x)$ as described above in such a way that
the boundary points in $\J\setminus\I$ are absorbing
(recall that we have some freedom to choose the boundary behavior if there are regular boundary points).
Take any $b\in\J\setminus\I$.
The function
$$
u(t,x)=0,
\quad (t,x)\in[0,T]\times J,
$$
is a bounded good solution to~\eqref{eq: backward PDE} that is continuous on
$[0,T]\times J$
and satisfies the terminal condition $u(T,x)=0$ for $x\in J$.
Define the function
$$
v(t,x)=E^{Q_x}\big[e^{-r(T-t)}\,\indic{\{b\}}(X_{T-t})\big],
\quad (t,x)\in[0,T]\times\J.
$$
By Theorem~\ref{theo: main1}~(a),
the restriction $v|_{[0,T]\times\I}$
is a bounded good solution to~\eqref{eq: backward PDE} satisfying the same terminal condition $v(T,x)=0$ for $x\in J$.
It is straightforward to verify that,
for any $[c,d]\subset\I$,
$v(t,x)\to0$, as $t\nearrow T$, uniformly in $x\in[c,d]$.
This implies that the restriction
$v|_{[0,T]\times\I}$
is continuous on $[0,T]\times\I$.
It remains to notice that
$$
u\text{ and }v|_{[0,T]\times\I}\text{ are different}.
$$
Indeed, it immediately follows from
\cite[Theorem 1.1]{bruggeman}
that $v(t,x)>0$ for all $(t,x)\in[0,T)\times\J$.
\end{example}

\begin{remark}\label{rem:240826a1}
While in Example~\ref{ex:240826a3}
the restriction $v|_{[0,T]\times\I}$
is continuous on $[0,T]\times\I$,
the function $v$ itself is clearly discontinuous
on $[0,T]\times\J$.
In Example~\ref{ex:240826a2} below we construct
in a specific situation with $\J\ne\I$ even
two different bounded good solutions $u$ and $v$
to~\eqref{eq: backward PDE}
that are both continuous on $[0,T]\times\J$
with $u(T,x)=v(T,x)$ for all $x\in\J$.
\end{remark}

	In part~(i) of Discussion~\ref{diss: main result} we already mentioned the connection of Theorem~\ref{theo: main1} to the pricing and hedging problem that motivates this paper. In particular, we pointed out that \eqref{eq: main hedging strategy} provides a hedging strategy irrespectively of the boundary behavior of \((J^* \ni x \mapsto Q_x)\).
	In the following we will discuss this connection in more detail, providing also a deterministic condition for
	the hedging strategy~\eqref{eq: main hedging strategy}
	to yield the minimal hedging capital.
	
	Set \(\oS := (\S^0, \S)\) and denote by \(L (\oS)\) the space of predictable stochastic processes $H = (H^{(0)}, H^{(1)})$
		that are \(\oS\)-integrable on all
		compact time intervals $[0,t]\subset[0,T)$.
	Furthermore, we define the \emph{value process}
	\[
	V^H := H_0 \, \oS_0 + \int_0^\cdot H_s \vd  \oS_s = H^{(0)}_0 \Y^0_0 + H^{(1)}_0 \Y_0 + \int_0^\cdot H^{(0)}_s \vd  \Y^0_s +  \int_0^\cdot H^{(1)}_s \vd  \Y_s
	\] 
	corresponding to the \emph{trading strategy} \(H = (H^{(0)}, H^{(1)}) \in L (\oS)\). 
	A trading strategy \(H = (H^{(0)}, H^{(1)})\in L (\oS)\) is called \emph{self-financing} if 
	\[
	V^H_t = H^{(0)}_t \Y^0_t + H^{(1)}_t \Y_t, \quad t \in [0, T). 
	\] 
	The set of all self-financing strategies is denoted by \(\textit{SF}\, (\oS)\). 
	Further, we define 
	\[
	\Pi_{\text{adm}} := \Big\{ H \in \textit{SF}\, (\oS) \colon \exists \, a \in \mathbb R_+ \text{ with } \inf_{t < T} V^H_t \geq - a \;\;\text{a.s.}\Big\} 
	\] 
	the set of all \emph{admissible} self-financing strategies. 
	In the context of Theorem~\ref{theo: main1}, it is interesting to relate \(E^{Q_{\Y_0}} [ e^{- r T}h(\X_T) ]\) to the so-called {\em minimal hedging capital} \(x (h, T)\) that is defined by 
	\[
	x (h, T) := \inf \Big\{ x \in \bR \colon \exists\, H \in \Pi_{\text{adm}} \text{ with } V^H_0 = x, \, \limsup_{s \nearrow T} V^H_{s} \geq h (\Y_T) \;\;\text{a.s.}\Big\}. 
	\] 
	Theorem~\ref{theo: main1} suggests the following upper bound for the minimal hedging capital of contingent claims with continuous payoff.
	\begin{lemma} \label{lem: simple ineq}
		If \(h \in C_b (J^*)\), then
		\[
		x (h, T) \leq E^{Q_{s_0}} \big[ e^{- rT} h (\X_T) \big]. 
		\] 
	\end{lemma}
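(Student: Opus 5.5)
The plan is to exhibit an admissible self-financing strategy with initial capital $E^{Q_{s_0}}[e^{-rT}h(\X_T)]$ whose value process has $\limsup_{s\nearrow T} V^H_s \geq h(\Y_T)$ almost surely. We take the strategy \eqref{eq: main hedging strategy} built from the fundamental value function $u$ in \eqref{eq:240826a1}:
\[
H^{(0)}_s := \frac{u(s,\Y_s)-\partial^-_x u(s,\Y_s)\Y_s}{\Y^0_s}, \qquad H^{(1)}_s := \partial^-_x u(s,\Y_s), \qquad s\in[0,T).
\]
By Theorem~\ref{theo: main1}(a), $u$ is a good solution, so Theorem~\ref{theo: main1}(b) applies. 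The processes $u(\cdot,\Y)$ and $\partial^-_x u(\cdot,\Y)$ are locally bounded and predictable, hence $H \in L(\oS)$. The representation \eqref{eq: 1st hedge} also shows that $V^H_t = u(t,\Y_t)$ for $t\in[0,T)$.

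Self-financing is a direct computation: $H^{(0)}_t\Y^0_t + H^{(1)}_t\Y_t = u(t,\Y_t) - \partial^-_x u\,\Y_t + \partial^-_x u\,\Y_t = u(t,\Y_t) = V^H_t$. The initial value is $V^H_0 = u(0,s_0) = E^{Q_{s_0}}[e^{-rT}h(\X_T)]$.

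For admissibility, we use $h\in C_b(\J)$, so $|h|\le c$. Then $|u(t,x)| \le e^{|r|T}c$ for all $(t,x)$, and therefore $\inf_{t<T} V^H_t \ge -e^{|r|T}c$. This gives $H\in\Pi_{\text{adm}}$.

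The terminal condition follows from Theorem~\ref{theo: main1}(c). By \eqref{eq: 2nd hedge}, almost surely
\[
\lim_{t\nearrow T} V^H_t = u(0,\Y_0) + \lim_{t\nearrow T}\Big(\int_0^t H^{(0)}_s \vd \Y^0_s + \int_0^t H^{(1)}_s \vd \Y_s\Big) = h(\Y_T).
\]
Hence $\limsup_{s\nearrow T} V^H_s = h(\Y_T)$, and the defining set for $x(h,T)$ contains $u(0,s_0)$, which yields the bound.

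The only delicate point is identifying $V^H$ with $u(\cdot,\Y)$ and confirming that \eqref{eq: 2nd hedge} is exactly $\lim V^H$. Both are supplied by Theorem~\ref{theo: main1}(b),(c). One must also check that $H_0\oS_0 = u(0,\Y_0)$, which again is just the self-financing identity evaluated at $t=0$. No further obstacle is expected.
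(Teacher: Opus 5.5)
Your proposal is correct and follows the paper's proof essentially step for step. You use the strategy built from the fundamental value function $u$, take self-financing and $V^H = u(\cdot,\Y)$ from Theorem~\ref{theo: main1}(b), get admissibility from $|u| \le e^{|r|T}\|h\|_\infty$, and obtain the initial capital $u(0,s_0)$ and the terminal condition $\lim_{t\nearrow T} V^H_t = h(\Y_T)$ from Theorem~\ref{theo: main1}(c).
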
 
	
	\begin{proof}
		Consider the strategy
		\begin{equation} \label{eq: main TS}
			\begin{split}
				\cH_t = (\cH^{(0)}_t, \cH^{(1)}_t),\quad &\cH^{(0)}_t := \frac{u (t,\Y_t) - \partial^-_x u (t, \Y_t) \Y_t}{\Y^0_t}, 
				\quad \cH^{(1)}_t := \partial^-_x u (t, \Y_t), \quad t < T,
			\end{split} 
		\end{equation} 
		where clearly \(H \in L (\overline{S})\) due to the fact that \(u\) is a good solution by Theorem~\ref{theo: main1}~(a).
		Since Theorem~\ref{theo: main1}~(b) yields 
		\begin{align*}
			\cH^{(0)}_t \Y^0_t + \cH^{(1)}_t \Y_t = E^{Q_{s_0}} \big[ e^{- rT}h (\X_T) \big] + \int_0^t \cH^{(0)}_s \vd  \Y^0_s + \int_0^t \cH^{(1)}_s \vd  \Y_s, \quad t < T, 
		\end{align*}
		the strategy \(\cH\) is self-financing. 
		Further, again by Theorem~\ref{theo: main1}~(b), 
		\[
		V^H_{t} = u (t, \Y_t) \geq - e^{|r| T}\|h\|_\infty, \quad t < T.
		\] 
		This shows that \(H \in \Pi_{\text{adm}}\). Lastly, Theorem~\ref{theo: main1}~(c) shows that a.s.
		\[
		\limsup_{t \nearrow T} V^{\cH}_t = \lim_{t \nearrow T} V^{\cH}_t = h (\Y_T),
		\]
		and that the hedging capital \(V^{\cH}_0\) of the strategy \(\cH\) is given by \(E^{Q_{s_0}} [ e^{- rT}h (\X_T) ]\). Thus, the claim follows. 
	\end{proof}

	In the following we derive necessary and sufficient conditions for the no-arbitrage notion NFLVR or,
    equivalently,
    for the existence of an equivalent local martingale measure (ELMM),
    and we also relate the latter to the general diffusion \(Q_{s_0}\).
    With an ELMM at hand, we can use a standard martingale argument (as in \cite[Theorem on p. 710]{shir}) to get the converse inequality to that in Lemma~\ref{lem: simple ineq}. 
	
	Recall that an ELMM \(Q\) (up to the time horizon \(T\)) is a probability measure on our underlying setup, equivalent to \(P\) on \(\cF_T\), such that the discounted price process 
	\begin{equation}\label{eq:200826a1}
		\tY:= \frac{\Y}{\Y^0}
	\end{equation}
	is a local \(Q\)-martingale on the time interval \([0, T]\). Furthermore, recall at this point that all standing assumptions are in force. 
	
	\begin{theorem} \label{theo: ELMM}
		\textup{(a)} An ELMM exists if and only if \(\J = \I\) (equivalently, \eqref{eq: no explosion}) and there exists a Borel function \(\beta \colon J \to \bR\) such that 
		\[
		\int_x^y \, \beta^2 (z) \G(z) \vd z < \infty, \quad \forall \, x, y \in J, \, x < y,
		\] 
		and
			\[
			\frac{\s'_+ (y)}{\G (y)} - \frac{\s'_+ (x)}{\G (x)} = \int_x^y \beta (z) \s'_+ (z) \vd z, \quad \forall \, x, y \in J, \, x < y. 
			\]

		\textup{(b)}  If \(Q\) is an ELMM, then
		\begin{align} \label{eq: ELMM identity}
			Q \circ \Y^{-1}  = Q_{s_0} \text{ on \(\mathcal{W}_T\)}. 
		\end{align} 
	\end{theorem}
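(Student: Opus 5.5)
Recall from the introduction that under any ELMM the \emph{un}discounted price follows the auxiliary diffusion; so the plan is to reduce everything to the question of when the law $Q_{s_0}$ of the auxiliary diffusion is locally equivalent to the law $P\circ \Y^{-1}$ of the market diffusion on $\mathcal W_T$. First I would prove part (b) and the necessity of $\J=\I$, and then derive the analytic condition on $\beta$ from a Girsanov argument in natural scale.

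\textbf{Step 1: proof of (b).} Let $Q$ be an ELMM. Under $Q$, $\Y$ is still a continuous semimartingale with the same quadratic variation (equivalence preserves $\langle \Y\rangle$ and all path properties). The goal is to show that $\Y$ remains a general diffusion under $Q$ with characteristics $(\g,\om)$. We use the time-change representation: $\Y=\s^{-1}(B_{\gamma})$ with $\gamma$ the inverse of the additive functional $\int L^y\,\m(\rd y)$. Local martingality of $\tY=e^{-rt}\Y$ means $\Y_t-\int_0^t r\Y_s\,\rd s$ is a local $Q$-martingale. We then apply the generalized It\^o--Tanaka formula to $\g(\Y)$, using that $\g$ is dc with second derivative measure $2\G(x-)\nu(\rd x)$ and that $\s,\s^{-1}$ are dc. The definition $\nu=-rx\s'_+\m$ is designed so that this drift cancels: the local-time term $\tfrac12\int L^x_t(\Y)\,\g''(\rd x)$, expressed via occupation times $\int_0^t f(\Y_s)\,\rd s=\int f\,L\,\rd\m$ in scale normalization, cancels $\int \G(\Y_s)\, r\Y_s\,\rd s$. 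Here Standing Assumption~\ref{SA: nu cond} handles atoms (sticky points, where $\m(\{x\})>0$). Hence $\g(\Y)$ is a continuous local $Q$-martingale. Its quadratic variation is identified through occupation densities as $\langle \g(\Y)\rangle=\int L^{y}\,\tm(\rd y)$-type inverse time change. By the Dambis--Dubins--Schwarz theorem together with the characterization of diffusions in natural scale by their speed measure (time-changed Brownian motion, uniqueness in law), $\g(\Y)$ has law $P_{\g(s_0)}$ up to the time $T$. By \eqref{eq:220826a3}, this gives \eqref{eq: ELMM identity}. One technical point is that the time change must be shown to be Markovian, not just a martingale, which follows from the representation as an inverse additive functional of the underlying Brownian motion.

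\textbf{Step 2: necessity.} Since $Q\sim P$ and $\Y$ never leaves $J$ under $P$, the law $Q_{s_0}$ cannot charge exit from $J$ before $T$ with positive probability. Hence every $b\in\J\setminus\I$ is excluded, i.e.\ $\J=\I$. For $\beta$: $P\circ\Y^{-1}$ and $Q_{s_0}$ are then two diffusions with the same quadratic-variation structure. We pass to natural scale $Y=\g(\Y)$, which is a $Q$-local martingale. Under $P$, $\s(\Y)$ is a local martingale, so $Y=\g\circ\s^{-1}(\s(\Y))$ has drift with density proportional to the derivative of $\s'_+/\G$ relative to $\s'_+$. Equivalence on $\mathcal W_T$ plus Girsanov forces this drift to be absolutely continuous with respect to $\rd\langle Y\rangle$, with a density $\beta$ satisfying local square integrability of the form $\int \beta^2\G\,\rd z<\infty$. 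Translating back gives the identity $\s'_+/\G(y)-\s'_+/\G(x)=\int_x^y\beta\s'_+\,\rd z$. I expect this to be the main obstacle. It requires an analogue of the Cherny--Urusov/Criens--Urusov criterion for absolute continuity of general diffusions, adapted here, and in particular showing that atoms and singular parts of the drift must vanish.

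\textbf{Step 3: sufficiency.} Given $\beta$ and $\J=\I$, define the density process $Z=\mathcal E\bigl(-\int \beta(\Y)\,\rd M\bigr)$, where $M$ is the martingale part. Local integrability follows from the $\beta^2\G$ condition via occupation formulas. To show that $Z$ is a true martingale on $[0,T]$, I would use the standard separating-times/localization argument: under the candidate measure the process is the auxiliary diffusion, which does not explode because $\J=\I$. Then, by Girsanov, $\tY$ is a local martingale under $Z\cdot P$.
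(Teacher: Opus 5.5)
Your outline is correct, but it reaches (b) and sufficiency by a different route from the paper.

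\textbf{Part (b).} The paper uses no time change. It\^o--Tanaka and the occupation formula show that $f(G(S_t))-\int_0^t k(G(S_u))\,\mathrm du$ is a bounded $Q$-martingale for every dc $f\in C_b(G(J))$ with $\mathrm df'_+=2k\,\mathrm d\widetilde m$. The martingale-problem characterization \cite[Lemmata~B.4, B.5]{CU_AAP_25} then gives $Q\circ S^{-1}=Q_{s_0}$ on $\mathcal W_{T_J-}$. Your Dambis--Dubins--Schwarz argument is a valid and more self-contained substitute. Its real content is that the pathwise identity $t=\int L^y_t(G(S))\,\widetilde m(\mathrm dy)$ holds $P$-a.s., hence $Q$-a.s. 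This forces $\langle G(S)\rangle$ to be the inverse of $\int L^y(W)\,\widetilde m(\mathrm dy)$ for the DDS Brownian motion $W$, so no separate ``Markovian time change'' argument is needed.

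\textbf{Sufficiency.} The paper writes down no density. On the canonical space it takes $Q:=Q_{s_0}$ and gets $Q\sim P$ on $\mathcal W_T$ from the same equivalence criterion \cite[Corollary~2.21]{CU_AAP_25} that yields necessity of $\beta$. It then verifies $\mathrm d\widetilde S_t=e^{-rt}\,\mathrm dG(S_t)/G'_-(S_t)$ and lifts to the original filtration via the density $Z_T(S)$. Your Girsanov construction gives an explicit market price of risk and works directly in $(\mathcal F_t)$, so the lifting step disappears. The price is that you must prove the true-martingale property yourself: localize at $\sigma_k=\inf\{t\colon\langle N\rangle_t\ge k\}$, apply Novikov, identify the law under $Z_{\sigma_k\wedge T}\cdot P$ up to $\sigma_k\wedge T$ by your Step~1, and conclude $E^P[Z_T]\ge\lim_k Q_{s_0}(\langle N\rangle_T<k)=1$ using $J^*=J$. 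In effect this re-derives half of the cited criterion.

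\textbf{Points to tighten.}
\begin{enumerate}
\item State that the DDS identification does not depend on the arbitrary boundary behaviour of $Q_{s_0}$, because $W$ stays in $G(J)$ up to $\langle G(S)\rangle_T$. This gives $Q_{s_0}(T_J>T)=1$. To conclude $J^*=J$ you then need that an accessible boundary is reached before any fixed $T>0$ with positive probability; the paper uses \cite[Theorem~1.1]{bruggeman}, and Feller's test alone does not give this.
\item For the $\beta$-part no adaptation of the Criens--Urusov criterion is required. Since $s'_+\,m=g\,\overline m$ by construction, $P\circ S^{-1}$ and $Q_{s_0}$ are general diffusions on the same interval $J$ without accessible boundaries. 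Hence \cite[Corollary~2.21]{CU_AAP_25} applies verbatim, exactly as in the paper.
\item Your kernel is mis-normalized. With the theorem's $\beta$ you need $Z=\mathcal E\bigl(\tfrac12\int\beta(S)\,\mathrm dM^G\bigr)$, where $M^G$ is the $P$-local martingale part of $G(S)$. In the SDE case $\beta g=2(rx-b)/\sigma^2$, so this is the familiar $\int\frac{rS-b(S)}{\sigma^2(S)}\,\mathrm dM^S$, with $M^S$ the martingale part of $S$. By contrast, $-\int\beta(S)\,\mathrm dM$ has the wrong sign and lacks the factor $\tfrac12$ (and the factor $g(S)$, if $M$ is the martingale part of $S$).
\end{enumerate}
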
 
	
	\begin{remark}
		In the zero interest rate regime \(r = 0\), we have \(\nu = 0\), \(\G (x) = 1, \g (x) = x - \xi\), and Standing Assumption~\ref{SA: nu cond} holds trivially. Thus, the two displays in Theorem~\ref{theo: ELMM}~(a) are equivalent to the structure
		\[
		\s (x) = \int^x \exp \Big\{ \int^y \beta (z) \vd z \Big\} \vd y, \quad \beta \in L^2_\textup{loc} (J), 
		\] 
		which resembles results for the existence of an ELMM from \cite[Corollary~3.11]{CU_FS_25}.
	\end{remark}

	We now obtain the following general result:
	\begin{theorem}
		\label{thm: identification}
		Assume that an ELMM exists (see Theorem~\ref{theo: ELMM}) and that \(h \in C_b (J)\). Then, the minimal hedging capital \(x (h, T)\) is given by 
		\[
		x (h, T)
        =
        u(0,s_0)
        =
        E^{Q_{s_0}} \big[ e^{- rT}h (\X_T) \big], 
		\] 
		and \(\cH\) from \eqref{eq: main TS} is a self-financing admissible hedging strategy with initial capital \(x (h, T)\). 
	\end{theorem}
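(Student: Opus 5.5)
The upper bound is already available. Under the ELMM assumption, Theorem~\ref{theo: ELMM}~(a) gives \(\J=\I\), so \(C_b(J)=C_b(J^*)\). Lemma~\ref{lem: simple ineq} then yields \(x(h,T)\le E^{Q_{s_0}}[e^{-rT}h(\X_T)]=u(0,s_0)\). Its proof also shows that \(\cH\) from \eqref{eq: main TS} is self-financing and admissible, and that \(\lim_{t\nearrow T}V^{\cH}_t=h(\Y_T)\) a.s. It therefore suffices to prove the reverse inequality \(x(h,T)\ge u(0,s_0)\). For this I would use the standard supermartingale argument (as in \cite[Theorem on p.~710]{shir}).

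Fix an ELMM \(Q\), and let \(x\in\bR\) and \(H\in\Pi_{\text{adm}}\) satisfy \(V^H_0=x\) and \(\limsup_{s\nearrow T}V^H_s\ge h(\Y_T)\) a.s. Since \(H\) is self-financing, integration by parts gives the usual discounting identity
\[
\frac{V^H_t}{\Y^0_t}=x+\int_0^t H^{(1)}_s\vd \tY_s,\quad t<T .
\]
This is the routine step where the numéraire drops out; one checks it via \(d(V/S^0)=(S^0)^{-1}dV+V\,d(1/S^0)\) and \(V=H^{(0)}S^0+H^{(1)}S\). Under \(Q\), \(\tY\) is a local martingale on \([0,T]\). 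Hence the stochastic integral is a \(Q\)-local martingale on \([0,T)\). It is bounded below by \(-a e^{|r|T}\) because of admissibility, so by Fatou it is a \(Q\)-supermartingale on \([0,T)\).

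Next I would pass to the limit \(t\nearrow T\). Take \(t_n\nearrow T\) along which \(V^H_{t_n}\to\limsup_{s\nearrow T}V^H_s\). Fatou's lemma, applicable since the process is bounded below, gives
\[
x\ge E^Q\Big[\liminf_n e^{-rt_n}V^H_{t_n}\Big]\ge E^Q\big[e^{-rT}h(\Y_T)\big].
\]
One small point needs care here. The supermartingale property gives \(E^Q[e^{-rt}V_t]\le x\) for each fixed \(t\), and Fatou along the chosen sequence \(t_n\) suffices. The sequence \(t_n\) may be random, but this is harmless: \(V\) has a limit in \([-\infty,\infty]\) at \(T\) by supermartingale convergence (applied to the lower-bounded \(Q\)-supermartingale, which converges a.s. as \(t\nearrow T\)). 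So \(\limsup=\lim\), and deterministic \(t_n\) can be used. The a.s. inequality transfers from \(P\) to \(Q\) by equivalence on \(\cF_T\).

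Finally, Theorem~\ref{theo: ELMM}~(b) gives \(Q\circ\Y^{-1}=Q_{s_0}\) on \(\mathcal W_T\), and therefore \(E^Q[e^{-rT}h(\Y_T)]=E^{Q_{s_0}}[e^{-rT}h(\X_T)]=u(0,s_0)\). Taking the infimum over \(x\) gives \(x(h,T)\ge u(0,s_0)\), which combined with the upper bound proves equality. The fact that \(\cH\) attains this value is then the content of Lemma~\ref{lem: simple ineq}'s proof. The only mildly delicate point is the limit argument at \(T\), which the supermartingale convergence handles. The real depth lies in Theorem~\ref{theo: ELMM}~(b), which I take as given.
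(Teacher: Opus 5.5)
Your proposal is correct and follows essentially the same route as the paper. The upper bound and the properties of \(\cH\) come from Lemma~\ref{lem: simple ineq}, and you rightly make explicit that \(\J=\I\) under an ELMM, so that \(h\in C_b(\J)\). The lower bound is the same argument the paper uses: \(e^{-rt}V^H_t\) is a lower-bounded local \(Q\)-martingale, hence a supermartingale that converges as \(t\nearrow T\), and combining this with Fatou's lemma and the identity \(Q\circ\Y^{-1}=Q_{s_0}\) from Theorem~\ref{theo: ELMM}~(b) gives \(x(h,T)\ge u(0,s_0)\).
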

	
	Summing up our observations, under the NFLVR condition (equivalently, the condition that an ELMM exists), for bounded continuous payoffs, Theorems~\ref{theo: uniqueness pricing eq}, \ref{theo: ELMM} and \ref{thm: identification} show that the pricing equation \eqref{eq: backward PDE} has a good solution \(u\), unique among bounded continuous solutions, that leads to the hedging strategy~\eqref{eq: main hedging strategy} whose initial value coincides with the minimal hedging capital.
	In case no ELMM exists, it may happen that the initial capital of~\eqref{eq: main hedging strategy} is strictly higher than the minimal hedging capital. We provide an example in Section~\ref{ssec: 3d_Bessel} below.

	\section{Market examples}
	\label{sec: examples}
	
	In this section, we put our findings into perspective by applying them to several examples of general diffusion markets with a constant interest rate.
    We always work in the setting described in the beginning of Section~\ref{subsec: market}.

	\subsection{Diffusion market with multiple skew-sticky thresholds}
	\label{ssec: skew_sticky}
	
	Our first example is a general diffusion market in which the risky asset's price follows an SDE away from finitely many thresholds, where it exhibits sticky and/or skew behavior. This example allows us to benchmark our findings against the well-established theory of classical SDE diffusion markets, such as the Black--Scholes model, which is recovered by taking the set of skew-sticky thresholds \(\mathcal Z = \emptyset\).
	
	For an open interval \(J = (\l, \r) \subset \mathbb{R}\), take two Borel functions \(b \colon J \to \mathbb{R}\) and \(\sigma \colon J \to \mathbb{R}\) that satisfy the Engelbert--Schmidt conditions
	\[
	\sigma(x)>0 \quad \forall\, x\in J, \qquad 
	\frac{1 + |b|}{\sigma^{2}} \in L^{1}_{\mathrm{loc}}(J).
	\]
	For the following we fix some $c\in J$.
	Consider a finite set \(\mc Z \subset J\),
	stickiness and skewness parameters \(\rho_\xi \geq 0\) and \(\kappa_\xi \in (0, 1)\) indexed over \(\xi \in \mc Z\), and set 
	\begin{align}
		s'_+(x) &:=  \exp \Big\{ - \int_{c}^{x} \frac{2 b(\zeta)}{\sigma^{2}(\zeta)} \vd \zeta \Big\} \begin{cases} \displaystyle \prod_{\xi \in \mc Z \cap [c, x]} \frac{1-\kappa_{\xi}}{\kappa_{\xi}}, & x \geq c, 
			\\ 
			\displaystyle \prod_{\xi \in \mc Z \cap (x, c)} \frac{\kappa_{\xi}}{1-\kappa_{\xi}}, & x < c, \end{cases} 
		\\
		s (x) &:= \int_{c}^x s'_+ (y) \vd y, 
		\\
		m(\rd x) &:= \frac{\rd x}{s'_+(x) \sigma^{2}(x)} + \sum_{\xi \in \mc Z} \frac{\rho_{\xi}}{\kappa_\xi s'_+ (\xi)}\, \delta_{\xi}(\rd x),
	\end{align}
where \(\prod_\emptyset := 1\).
	Assume further that \(s\) and \(m\) satisfy Feller's test~\eqref{eq: feller test} for the boundaries $\l $ and~$\r$.
	The pair \((s,m)\) then defines, via scale and speed, a general diffusion on \(J\) that is a unique in law weak solution to the SDE system involving local time\footnote{It is worth noting that,
		in contrast to \cite{salins2017markovprocesses},
		we always use the \emph{right} local time in this paper,
		i.e.,
		the one that is right-continuous in the space variable.}
	(see \cite{salins2017markovprocesses})
	\begin{align*}
		\mathrm d \Y_t &= b(\Y_t) \mathbf{1}_{\{\Y_t \notin \mathcal Z\}} \, \mathrm d t + \sigma(\Y_t) \1_{\{\Y_t \notin \mathcal Z\}} \, \mathrm d W_t + \sum_{\xi \in \mathcal Z} \frac{2 \kappa_\xi - 1}{2 \kappa_\xi} \, \mathrm d L^\xi_t(\Y),
		\\
		\1_{\{\Y_t = \xi\}} \, \mathrm d t &= \frac{\rho_{\xi}}{\kappa_{\xi}} \, \mathrm d L^\xi_t(\Y), \quad \forall\, \xi \in \mathcal Z.
	\end{align*}
	Define the measure $\nu $ as
	\begin{equation}
		\nu(\rd x) := - r x s'_+(x)\, m(\rd x)
		= - \frac{r x}{\sigma^{2}(x)} \vd x - \sum_{\xi \in \mc Z} \frac{r \xi \rho_{\xi}}{\kappa_\xi}\, \delta_{\xi}(\rd x)
	\end{equation}
	and notice that Standing Assumption~\ref{SA: nu cond} means
	\begin{equation}\label{eq:220826a1}
		2 r \xi \rho_\xi < \kappa_\xi,
		\quad \forall\, \xi \in \mathcal Z.
	\end{equation}
	We set
	\begin{equation}\label{eq:220826a2}
		\wt\kappa_\xi
		:=
		\frac{\kappa_\xi}{2(\kappa_\xi- r\xi\rho_\xi)},
		\quad
		\xi\in\mc Z,
	\end{equation}
	and observe that $\wt\kappa_\xi\in(0,1)$ due to~\eqref{eq:220826a1}.
	With notation~\eqref{eq:220826a2}, the characteristics of the auxiliary diffusion $(G,\ol m) $ are
	\begin{align}
		g(x) &:=  \exp\Big\{- \int_{c}^{x} \frac{2 r\zeta}{\sigma^{2}(\zeta)}\vd \zeta \Big\}
		\begin{cases} \displaystyle \prod_{\xi \in \mc Z \cap [c, x]} \frac{1-\wt\kappa_\xi}{\wt\kappa_\xi}, & x \geq c, 
			\\
			\displaystyle \prod_{\xi \in \mc Z \cap (x, c)} \frac{\wt\kappa_\xi}{1-\wt\kappa_\xi}, & x < c,
		\end{cases}
		\\
		G (x) &:= \int_{c}^x g (y) \vd y, 
		\\
		\ol m(\rd x) &:= \frac{\rd x}{G'_+(x) \sigma^{2}(x)} + \sum_{\xi \in \mc Z} 
		\frac{\rho_{\xi} }{\kappa_\xi G'_+(\xi) } \delta_\xi (\rd x) ,
	\end{align}
	and the hedging PDE~\eqref{eq: backward PDE} takes the form
	\begin{equation}
		\label{eq: skew_sticky:hedging_equation}
		\begin{dcases}
			u_t(t,x) + \tfrac{1}{2} \sigma^2(x) u_{xx}(t,x) + r x \, u_x(t,x) - r u(t,x) = 0, & x \in J \setminus \mathcal Z, \, t \in (0,T), 
			\\[3pt]
			\wt \kappa_{\xi} \partial^-_x u(t, \xi+)
			-
			(1 - \wt \kappa_{\xi}) \partial^-_x u(t, \xi)
			=
			\frac{2 \rho_{\xi} \wt \kappa_\xi}{\kappa_\xi}
			\big( r u(t, \xi) - u_t(t, \xi) \big),
			& t \in (0,T),\, \xi \in \mathcal Z, 
			\\[3pt]
			u(T,x) = h(x), & x \in J.
		\end{dcases}
	\end{equation}
	Applying Theorem~\ref{theo: ELMM} to this model, an ELMM exists if and only if
	\begin{equation}
		\label{eq: ELMM_SkS}
		\eqref{eq: no explosion} \text{ holds,}\quad
		\kappa_\xi=\wt\kappa_\xi,
		\ \forall\, \xi \in \mathcal Z,
		\quad \text{and} \quad 
		\Big( x \mapsto \frac{ r x - b(x) }{\sigma^2 (x)} \Big) \in L^2_{\mathrm{loc}}(J).
	\end{equation}
Some remarks concerning these conditions are in order.
		
		\begin{remark}\label{rem:220826a2}
			(a)
				In terms of our input parameters only,
				the second condition in~\eqref{eq: ELMM_SkS} is equivalently expressed as
				$$
				2 r \xi \rho_{\xi} = 2\kappa_{\xi} - 1,
				\quad \forall\, \xi \in \mathcal Z.
				$$
				It is worth noting that the latter relation
				implies~\eqref{eq:220826a1}, as $\kappa_\xi<1$.
			
			\smallskip
			\noindent
			(b) The non-explosion condition \eqref{eq: no explosion} is satisfied if and only if, for an arbitrary \(c \in J\), 
			\begin{align*}
				\int_{\l +}^c e^{\int_x^c \frac{2rz}{\sigma^2(z)} \vd z} \Big( \int_{x}^c e^{- \int_y^c \frac{2rz}{\sigma^2(z)} \vd z} \frac{\vd y}{\sigma^2(y)} \Big) \vd x &= \infty, \\
				\int_c^{\r -} e^{- \int_c^x \frac{2rz}{\sigma^2(z)} \vd z} \Big(\int_c^x e^{\int_c^y \frac{2rz}{\sigma^2(z)} \vd z} \frac{\vd y}{\sigma^2 (y)} \Big) \vd x &= \infty, 
			\end{align*}
			see \cite[Eq. (5.65) on p. 347, Theorem~5.5.29]{KaraShre}. For the canonical choices \(J = \bR, J = (0, \infty)\), and \(J = (0, 1)\), a straightforward analysis of the above integral test yields the following conditions:
			\begin{enumerate}
				\item[(b.1)] In case \(J = \bR\), \eqref{eq: no explosion} is always satisfied. This observation is comparable to the well-known fact that SDEs without drift cannot explode out of the real line, see \cite[Problem~5.5.3]{KaraShre}.
				\item[(b.2)] In case \(J = (0, \infty)\), \eqref{eq: no explosion} is equivalent to 
				\begin{align} \label{eq: no 0 explosion}
					\int_{0 + }^c \frac{x}{\sigma^2 (x)} \vd x = \infty.
				\end{align} 
				Again, no additional constraint at the infinite boundary is needed.
				\item[(b.3)] In case \(J = (0, 1)\), one needs to distinguish between different values of the interest rate \(r\). More specifically, \eqref{eq: no explosion} is satisfied if and only if \eqref{eq: no 0 explosion} and one of the following two conditions hold:
				
				\noindent
				(b.3.1) \(r = 0\) and 
				\[
				\int^{1-}_c \frac{1 - x}{\sigma^2 (x)} \vd x = \infty; 
				\]
				
				\noindent
				(b.3.2) \(r < 0\) and 
				\[
				\int^{1-}_c e^{- 2r \int^x_c \frac{z}{\sigma^2 (z)} \vd z} \vd x = \infty.
				\]
				In particular, \eqref{eq: no explosion} is always violated when \(r > 0\), which intuitively means that the positive drift pushes the process towards the boundary \(1\), irrespectively of the diffusive behavior. 
			\end{enumerate}
			(c) 			As mentioned in Remark~\ref{rem:220826a1}, the NIP no-arbitrage condition implies \eqref{eq: (ii) from weak arbitrage paper}, which means in the current setting that \(2 r \xi \rho_\xi = 2 \kappa_\xi - 1\) for all \(\xi \in \mathcal{Z}\)
			(equivalently,
				$\kappa_\xi=\wt\kappa_\xi$
				for all \(\xi \in \mathcal{Z}\);
				see item~(a)).
			As a consequence, if this condition fails, the market even allows for increasing profits, which may be constructed explicitly as described in \cite{ACU_25_SIFIN}.
	\end{remark} 
	Returning to the pricing and hedging problem, by Theorem~\ref{thm: identification}, if the condiditions from~\eqref{eq: ELMM_SkS} are satisfied, the hedging PDE is the fundamental pricing PDE, which reads:
	\begin{equation}
		\label{eq: skew_sticky:hedging_equation_RN}
		\begin{dcases} 
			u_t(t,x) + \tfrac{1}{2} \sigma^2(x) u_{xx}(t,x) + r x \, u_x(t,x) - r u(t,x) = 0, & x \in J \setminus \mathcal Z, \, t \in (0,T),
			\\[3pt]
			\kappa_{\xi} \partial^-_x u(t, \xi+)
			-
			(1 - \kappa_{\xi}) \partial^-_x u(t, \xi)
			=
			2 \rho_{\xi}
			\big( r u(t, \xi) - u_t(t, \xi) \big),
			& t \in (0,T),\, \xi \in \mathcal Z,
			\\[3pt]
			u(T,x) = h(x), & x \in J,
		\end{dcases}
	\end{equation}
	and the initial value of the unique good solution is the minimal hedging capital of the contingent claim with payoff~\(h\).
	
	\begin{remark}
	If the ELMM exists, comparing the initial price process and the auxiliary diffusion (implied from~\eqref{eq: skew_sticky:hedging_equation_RN}),
		we observe that only the drift is modified to match the growth of the risk-free asset; all other parameters (skewness and stickiness $(\kappa_{\xi},\rho_{\xi}) $ at every threshold and the diffusion coefficient~$\sigma $) are left unchanged.
	\end{remark}

	\subsection{Diffusion market where $(G,\ol m) $ has a regular boundary}
	\label{ssec: 3d_Bessel}
Here we revisit a famous example of Delbaen and Schachermayer
\cite{DS1995a}
for the existence of arbitrage (with admissible strategies).
It is further investigated e.g.
in \cite[Example 4.6]{karatzaskardaras07}
and
in \cite[Section 6]{R_13}.
We now discuss the example from the perspective of our hedging PDE.
As usual in the setting of this paper, we consider an interest rate $r\in\mathbb R$, which will appear in formulas below.

	Consider the diffusion with state space $\I=(0,\infty)$ defined via the scale and speed
	\begin{equation}
		s(x) = - \frac{1}{x}, \quad m(\rd x) = x^{2} \vd x, \quad x>0.
	\end{equation}
	These characteristics \((\s, \m)\) correspond to a three-dimensional Bessel process (BES$^3$), which has an inaccessible boundary at $0$ (more precisely, a so-called entrance boundary), and whose dynamics is captured by the SDE
	\begin{equation}
		\vd \Y_t = \frac{1}{\Y_t} \vd t + \rd W_t,
	\end{equation}
	where $W$ is a Brownian motion. For an exhaustive account of the BES$^3$ process, see \cite[Section VI.3]{RY}. In this case, the characteristics $(G,\ol m)$ of the auxiliary diffusion are of the form
	\begin{equation}
		G(x) = \int_0^x g(y) \vd y, \quad
		\ol m(\rd x) =  \frac{\rd x}{g(x)} ,\quad 
		g(x) = \exp \big\{-r x^2\big\}.
	\end{equation}
	That is, the generator on \((0, \infty)\) is of the Ornstein--Uhlenbeck-type, the state space is $\J=[0,\infty)$, while $0$ is a regular boundary point.
    In particular, $\J\ne\I$, hence, by Theorem~\ref{theo: ELMM}, NFLVR fails.
	On the other hand, the process
	\[
	Z_t := \exp \Big\{ - \int_0^t \Big( \frac{1}{\Y_s} - r \Y_s \Big) \vd W_s - \frac{1}{2} \int_0^t \Big( \frac{1}{\Y_s} - r \Y_s \Big)^2 \vd s  \Big\}, \quad t \in [0, T],
	\]
	is a so-called \emph{strict martingale density (SMD)},
	i.e., a strictly positive local martingale such that \(Z \tY\) is a local martingale
	(for the notation $\tY$, recall~\eqref{eq:200826a1}).
    This means that the weaker than NFLVR no-arbitrage condition of ``no unbounded profit with bounded risk'' (NUPBR) holds.
    For more details on NUPBR, in particular, for many characterizations in continuous market models, we refer to
    \cite{Fontana2015,HS10}.
    Moreover, in this setting, $Z$ is a strict local martingale, i.e., a local martingale but not a true martingale
    (because otherwise the measure with the density process $Z$ would be an ELMM).
    In particular, \(E^P [ Z_T ] < 1\).

    We proceed with two specific examples in the described setting.

\begin{example}\label{ex:240826a1}
We illustrate that without NFLVR our hedge does not necessarily achieve the minimal hedging capital,
i.e.,
the assumption of the existence of an ELMM in Theorem~\ref{thm: identification} cannot be dropped.
Let $(\cF^\Y_t)_{t\ge0}$ denote the right-continuous filtration generated by $\Y$.
We first observe that $Z$ is an SMD also on the filtration
$(\cF^\Y_t)_{t \geq 0}$.
Further, on $(\cF^\Y_t)_{t \geq 0}$, the discounted price process
$\tY$
possesses the predictable representation property,
as it is inherited from that of~$\Y$
(see \cite[Theorem~2.1]{CU24} for the latter).
By an application of Theorems 3.46 and~3.38 in \cite{karatzaskardaras},
$Z$ is the only SMD on $(\cF^\Y_t)_{t \geq 0}$.
Now, consider an arbitrary SMD $\wt Z$ on $(\cF_t)_{t \geq 0}$.
It follows from \cite[Theorem 3.1]{KardarasRuf2020} that a.s.
$E^P[\wt Z_T \mid \cF^\Y_T]\le Z_T$.
This, in particular, implies that, for any nonnegative claim
$h\colon\J\to[0,\infty)$,
we have
$$
E^P\big[\wt Z_T e^{-rT} h(\Y_T)\big]\le E^P[Z_T e^{-rT} h(\Y_T)].
$$
Then, \cite[Theorem~3.7]{karatzaskardaras} yields that the claim $h(\Y_T)$ can be hedged
from the starting capital
$E^P[Z_T e^{-rT} h(\Y_T)]$
via a self-financing trading strategy with a nonnegative value process
provided $E^P[Z_T e^{-rT} h(\Y_T)]<\infty$.\footnote{This seemingly overcomplicated argument is due to the fact that we allow for incompleteness, i.e.,
the underlying filtration $(\cF_t)_{t \geq 0}$ can be bigger than $(\cF^\Y_t)_{t \geq 0}$.}
Thus, it now remains to take a constant function
$h\colon\J\to\mathbb R$, $h(x)\equiv h>0$,
and to observe that
$$
E^P [ Z_T e^{- rT} h (\Y_T) ]
=
e^{- rT}h \, E^P [Z_T]
<
e^{- rT}h
=
E^{Q_{s_0}} [ e^{- rT} h (X_T) ]
=
u(0,s_0),
$$
which shows that, in this example, our fundamental value function leads to a higher price than necessary.
\end{example}

\begin{example}\label{ex:240826a2}
We now strengthen the message of Example~\ref{ex:240826a3} as discussed in Remark~\ref{rem:240826a1},
that is,
we construct two different bounded continuous functions
$u^{(i)}\colon[0,T]\times\J\to\mathbb R$,
$i\in\{1,2\}$,
such that they are both good solutions
to~\eqref{eq: backward PDE}
with $u^{(1)}(T,x)=u^{(2)}(T,x)$ for all $x\in\J$.

Recall that, according to the general methodology explained in Section~\ref{sec: hedging_PDE},
we need to extend $\om$ to $\J=[0,\infty)$ in an arbitrary way.
To be precise, take $m_0\in[0,\infty]$ and set
$$
\om(\{0\}):=m_0.
$$
This makes the boundary point $0$ absorbing
(resp., instantaneously reflecting;
resp., slowly reflecting)
for the auxiliary diffusion
$(\J\ni x\mapsto Q_x)$
if $m_0=\infty$
(resp., $m_0=0$;
resp., $m_0\in(0,\infty)$).
Take a continuous strictly decreasing function
$h\colon\J=[0,\infty)\to(0,\infty)$.
In particular, $h\in C_b(\J)$.
Below in this example we write
$u^{(1)}$ and $Q_x^{(1)}$
(resp., $u^{(2)}$ and $Q_x^{(2)}$)
for the fundamental value function~\eqref{eq:240826a1}
defined on $[0,T]\times\J$
and for the diffusion measure on the path space
that correspond to $m_0=\infty$
(resp., to some fixed $m_0\in[0,\infty)$).
By Theorem~\ref{theo: main1}, for $i\in\{1,2\}$,
the function $u^{(i)}$ is a bounded good solution to~\eqref{eq: backward PDE} satisfying
$u^{(i)}(T,x)=h(x)$ for $x\in\J$.
The continuity of $u^{(i)}$ on $[0,T]\times\J$
follows from \cite[Lemma~30, p.~116]{freedman}.
Using~\eqref{eq:240826a1},
\cite[Theorem 1.1]{bruggeman}
and the fact that,
under $Q_x^{(2)}$,
the coordinate process $X$ stopped upon hitting $0$
has distribution $Q_x^{(1)}$,
we obtain that
$u^{(1)}(t,x)>u^{(2)}(t,x)$ for all $(t,x)\in[0,T)\times\J$.
This proves the claimed non-uniqueness.
\end{example}

	\subsection{Diffusion market where $(G,\ol m) $ has an exit boundary}
	\label{ssec: exit}

    In the spirit of Section~\ref{ssec: 3d_Bessel}, we now present an example where the auxiliary diffusion with characteristics \((\g, \ol m)\) has an exit boundary (in contrast to the regular boundary case from Section~\ref{ssec: 3d_Bessel}).
    For the financial part of the story, this again means that NFLVR fails.
    For purely probabilistic part of the story,
    this shows that all accessible boundary classifications are possible.
    Let the diffusion $\Y$ with state space $\I=(0,\infty)$
    be driven by the SDE
    $$
    \rd \Y_t
    =
    2 \vd t
    +
    2 \sqrt{\Y_t} \vd W_t,
    $$
    where $W$ is a Brownian motion.
    In other words, $\Y$ is a two-dimensional squared Bessel process (BESQ$^2$).
    The boundary point $0$ is inaccessible for BESQ$^2$
    (more precisely, entrance boundary).
    For completeness, the scale and speed are
    $$
    s(x) = \log x,
    \quad
    m(\rd x) = \frac14 \vd x,
    \quad
    x>0.
    $$
    For an extensive account of BESQ processes,
    see \cite[Section XI.1]{RY}.

	In this case, the auxiliary characteristics $(G,\ol m) $ take the form
	$$
		G(x) = \int_0^x e^{-r y / 2} \vd y,
        \quad
		\ol m(\rd x) =  \frac1{4x} \, e^{r x / 2} \vd x,
        \quad
        x>0
    $$
    (we leave $G$ in the integral form in order to avoid distinguishing between the cases $r=0$ and $r\ne0$).
    By Feller's test, the boundary point $0$ is accessible,
    that is, the state space is $\J=[0,\infty)$.
    The fact that $\om((0,\varepsilon))=\infty$ for all $\varepsilon>0$ yields that $0$ is an exit boundary of the auxiliary diffusion
    (or see, e.g., \cite[Section 5.11]{ito2006book} for integral criteria).
	For completeness, we observe that the auxiliary diffusion here is driven by the SDE
	$$
	\rd Y_t
    = 
	r Y_t \vd t
    +
    2 \sqrt{Y_t} \vd B_t,
    $$
    where $B$ is a Brownian motion
	(the notation $Y,B$ instead of $\Y,W$ is to emphasize that the auxiliary diffusion as a process need not be related to the original diffusion as a process).
    To provide also the SDE perspective, we remark that the fact $0$ is an exit boundary point for $Y$ is comparable to the well-known fact that $0$ is an exit boundary for BESQ$^0$ SDE
    $\rd Y_t=2\sqrt{Y_t}\vd B_t$.
    In contrast to Section~\ref{ssec: 3d_Bessel}, here the auxiliary diffusion cannot be chosen to have an instantaneous or slow reflection at~$0$.

	\section{Numerical experiments}
	\label{sec: numexp}
	
	In this section, we provide numerical illustrations of the main theoretical results from Sections~\ref{sec: hedging_PDE} and~\ref{ssec: skew_sticky}. The experiments serve three purposes: (i)~to demonstrate that the discrete hedging strategy derived from Theorem~\ref{theo: main1} indeed yields vanishing tracking error; (ii)~to show that the strategy remains effective even when the NFLVR condition fails, though the associated hedging capital may be non-minimal; and (iii)~to provide a concrete example of the general diffusion framework and its hedging PDE.
	
	Since continuous hedging is not feasible in practice, we implement discretized hedging strategies with a finite number of portfolio rebalancing instances. In smooth SDE diffusion models, the tracking error is known to converge in \(L^2\) to \(0\) at rate \(O(N^{-1/2})\), where \(N\) is the number of uniform-in-time rebalancing instances (see the seminal work of Bertsimas et al.~\cite{bertsimas2000when}). The solution to the hedging equation and delta field are approximated numerically using an Implicit Finite Difference (IFD) scheme on the PDE problem~\eqref{eq: backward PDE}, while the hedging performance is evaluated using Monte Carlo estimation over a number of simulated path scenarios. 
	The emphasis of this section is on the qualitative behaviour predicted by the theory, not on a rigorous convergence analysis of the numerical methods themselves.
	
	\subsection{Model Configuration}
	
	We consider a two‐threshold general diffusion on \(\mathbb{R}\) with scale density and speed measure given by
	\begin{equation}
    \label{eq: SKS_sm}
		\begin{aligned}
		s'_+(x) &:= \frac{\kappa_{-1}}{1-\kappa_{-1}}\mathbf{1}_{(-\infty,-1)}(x) + \mathbf{1}_{[-1,1)}(x) + \frac{1-\kappa_1}{\kappa_1}\mathbf{1}_{[1,\infty)}(x),\\
		s(x) &:= \int_0^x s'_+(y)\,dy,\\
		m(dx) &:= \frac{dx}{s'_+(x)} + \frac{\rho_{-1}}{\kappa_{-1}}\delta_{-1}(dx) + \frac{\rho_1}{1-\kappa_1}\delta_1(dx),
	\end{aligned}
	\end{equation}
	with parameters \(\kappa_{-1},\kappa_1\in(0,1)\), \(\rho_{-1},\rho_1\ge 0\), and constant interest rate \(r\in\mathbb{R}\). 
	This family includes the Bachelier model when \((\kappa_{-1},\kappa_1,\rho_{-1},\rho_1)=(0.5,0.5,0,0)\). 
	Existence of an ELMM in this model depends solely on whether the balance condition is satisfied at \(\pm 1\):
	\[
	-2r\rho_{-1}=2\kappa_{-1}-1,\qquad 2r\rho_1=2\kappa_1-1.
	\]
	
	We price and hedge a European bear spread option with payoff
	\[
	h(x) = (K_{\mathrm{high}} - x)^+ - (K_{\mathrm{low}} - x)^+,\qquad a^+:=\max\{a,0\},
	\]
	with lower strike \(K_{\mathrm{low}}=-2\), upper strike \(K_{\mathrm{high}}=2\), and time horizon \(T=10\). This bounded payoff is consistent with our \(C_b\) setting and features two kinks at the strike levels, which provide a nontrivial hedging experiment. 
	
	\subsection{Numerical Pipeline}

Instead of looking directly at the hedging PDE, we focus on the PDE solved by $v(t,x):= e^{r(T-t)}u(t,x)$, which is
\begin{align} \label{eq: forward PDE}
	\frac{\partial v}{\partial t} + \frac{1}{2} \frac{\partial}{\partial \om} \frac{\partial^- v }{\partial \g} = 0 \quad \text{ on }
	{\color{magenta}[}0, T) \times J, 		
\end{align} 
Our choice is justified by the empirical observation that the finite difference scheme for this equation produces less biased value fields than for the hedging PDE~\eqref{eq: backward PDE}. 

For the general diffusion market defined by~\eqref{eq: SKS_sm}, the PDE~\eqref{eq: forward PDE} reads
\begin{equation}
	\label{eq: forward PDE_spec}
	\begin{dcases} 
		v_t(t,x) + \tfrac{1}{2} v_{xx}(t,x) + r x \, v_x(t,x) = 0, & x \in J \setminus \{-1,1\}, \, t \in (0,T),
		\\
		\wt \kappa_{-1} \partial^-_x v(t, (-1)+)
		-
		(1 - \wt \kappa_{-1}) \partial^-_x v(t, -1)
		=
		- \frac{2 \rho_{-1} \wt \kappa_{-1}}{\kappa_{-1}}
		 v_t(t, -1),
		& t \in (0,T),
		\\
		\wt \kappa_{1} \partial^-_x v(t, 1+)
-
(1 - \wt \kappa_{\xi}) \partial^-_x v(t, 1)
=
- \frac{2 \rho_{1} \wt \kappa_1}{\kappa_1}
v_t(t, 1),
		& t \in (0,T),
		\\
		u(T,x) = h(x), & x \in \IR,
	\end{dcases}
\end{equation} 
where $\wt \kappa_{\xi} $ is given by~\eqref{eq:220826a2}. 

We approximate the solution to~\eqref{eq: forward PDE_spec} on a truncated domain \([-50,50]\) using a fully implicit time-stepping method. At the artificial boundaries \(x = \pm 50\), we impose homogeneous Neumann conditions \(\partial_x u = 0\), which is natural since the payoff function satisfies \(\partial_x h = 0\) at \(\pm 50\). The domain is chosen sufficiently large so that boundary effects are negligible. The jump conditions at the skew-sticky interfaces \(\pm 1\) are imposed by solving the discretized version of \eqref{eq: skew_sticky:hedging_equation} at the relevant grid points.

We discretize the domain with a uniform grid in time of $N^{\mathrm{time}}_{\mathrm{FD}} =500 $ points and a grid in space on $[-50,50] $ of $N^{\mathrm{space}}_{\mathrm{FD}} \in \{1000,2000,4000\} $ points that is:
\begin{itemize}
	\item of uniform step $h_{\mathrm{step}} = 100/ N^{\mathrm{space}}_{\mathrm{FD}} $ away from the skew-sticky points
	
	\item of step $h_{\mathrm{step}}^{2} $ around those points.
\end{itemize}
The left-hand spacial derivative approximation is obtained by finite differences:
\begin{equation}
	\partial^{-}_{x}v(t_i,x_j) \approx \frac{v(t_{i},x_{j}) - v(t_{i},x_{j-1})}{x_j - x_{j-1}}.
\end{equation}
Intermediate values of these quantities are obtained via linear interpolation between grid points, with constant extrapolation near the interfaces to respect the jump in delta. 
For any $(t,x)\in [0,T]\times \IR $, let $\mathfrak v(t,x) $, $\partial^{-}_{x} \mathfrak v(t,x) $ be the evaluations of these interpolations.
By the relation between the solutions of~\eqref{eq: backward PDE} and  \eqref{eq: forward PDE}, we infer the following numerical approximations of the solution to the true hedging equation $u $ and its associated  delta field $\partial^{-}_xu $:
\begin{align}
	(t,x) &\mapsto \texttt{value\_field}(t,x) := e^{-r(T-t)} \mathfrak v(t,x), \\
	 (t,x) &\mapsto \texttt{delta\_field}(t,x) := e^{-r(T-t)} \partial^{-}_{x} \mathfrak v(t,x) ,\quad t\in [0,T], \quad x\in [-50,50].
\end{align}

	To generate paths of the diffusion, we use the Space‐Time Markov Chain Approximation from \cite{anagnostakis2023general} with \(N_{\mathrm{MC}}=2000\) paths. For each path, we apply the discrete delta‐hedging strategy (Algorithm~\ref{algo_deltahedging}) for numbers of rebalancing instances \(N\in\{32,64,\dots,4096\}\) using the IFD approximations of the price and delta fields. 
	The IFD scheme and the path simulation are standard tools and are used here to obtain approximate price and delta fields; their accuracy is sufficient to illustrate the qualitative behaviour predicted by the theory. A brief validation against the Bachelier model 
	(for the case without skew-sticky points) confirms that the numerical errors are small relative to the effects under study.

	\begin{table}[htbp]
		\caption{Hedging performance metrics (MTE* and StDTE*) for all models, where the $\pm $ indicates the 95\% confidence interval.
		Results are for a finite difference scheme with  $N^{\mathrm{time}}_{\mathrm{FD}} =500 $ steps in time,  $N^{\mathrm{space}}_{\mathrm{FD}} = 4000 $ steps in space, and a Monte Carlo sample size of $N_{\mathrm{MC}} = 2000 $. All values are rounded to three decimals; values of 0.000 indicate a magnitude smaller than 0.0005.}
		\label{tab:results}
		\centering
		\begin{subtable}{0.5\textwidth}
			\centering
			\subcaption{Bachelier (Premium $= 2.0$)}
			\label{tab:bachelier}
			\begin{tabular}{r |r |r}
				\toprule
				$N$ & MTE* & StDTE* \\
				\midrule
				32 & $-0.002\,  \pm0.008$ & $0.192\,  \pm0.009$ \\
				64 & $-0.001\,  \pm0.006$ & $0.137\,  \pm0.007$ \\
				128 & $-0.000\,  \pm0.004$ & $0.100\,  \pm0.005$ \\
				256 & $-0.002\,  \pm0.003$ & $0.070\,  \pm0.003$ \\
				512 & $-0.000\,  \pm0.002$ & $0.050\,  \pm0.002$ \\
				1024 & $-0.001\,  \pm0.002$ & $0.036\,  \pm0.002$ \\
				2048 & $-0.001\,  \pm0.001$ & $0.027\,  \pm0.001$ \\
				4096 & $-0.001\,  \pm0.001$ & $0.019\,  \pm0.001$ \\
				\bottomrule
			\end{tabular}
		\end{subtable}
		~\;~
		\begin{subtable}{0.5\textwidth}
			\centering
			\subcaption{Skew-Sticky 1 (Premium $= 0.271$)}
			\label{tab:ss1}
			\begin{tabular}{r |r |r}
				\toprule
				$N$  & MTE* & StDTE* \\
				\midrule
				32 & $0.058\,  \pm0.022$ & $0.506\,  \pm0.024$ \\
				64 & $0.045\,  \pm0.018$ & $0.408\,  \pm0.019$ \\
				128 & $0.032\,  \pm0.015$ & $0.337\,  \pm0.016$ \\
				256 & $0.022\,  \pm0.012$ & $0.269\,  \pm0.013$ \\
				512 & $0.015\,  \pm0.010$ & $0.223\,  \pm0.011$ \\
				1024 & $0.008\,  \pm0.008$ & $0.184\,  \pm0.009$ \\
				2048 & $0.009\,  \pm0.007$ & $0.156\,  \pm0.007$ \\
				4096 & $0.013\,  \pm0.006$ & $0.131\,  \pm0.006$ \\
				\bottomrule
			\end{tabular}
		\end{subtable}
		\\[0.2in]
		\begin{subtable}{0.5\textwidth} 
			\centering
			\subcaption{Skew-Sticky 2 (Premium $=14.778
				$)}
			\label{tab:ss2}
			\begin{tabular}{r |r |r}
				\toprule
				$N$  & MTE* & StDTE* \\
				\midrule
				32 & $-0.007\,  \pm0.005$ & $0.115\,  \pm0.005$ \\
				64 & $-0.004\,  \pm0.004$ & $0.080\,  \pm0.004$ \\
				128 & $-0.003\,  \pm0.003$ & $0.058\,  \pm0.003$ \\
				256 & $-0.002\,  \pm0.002$ & $0.042\,  \pm0.002$ \\
				512 & $-0.002\,  \pm0.001$ & $0.032\,  \pm0.002$ \\
				1024 & $-0.001\,  \pm0.001$ & $0.025\,  \pm0.001$ \\
				2048 & $-0.001\,  \pm0.001$ & $0.019\,  \pm0.001$ \\
				4096 & $-0.001\,  \pm0.001$ & $0.016\,  \pm0.001$ \\
				\bottomrule
			\end{tabular}
		\end{subtable}
		~\;~
		\begin{subtable}{0.5\textwidth} 
			\centering
			\subcaption{Skew-Sticky 3 (Premium $= 2.0$)}
			\label{tab:ss3}
			\begin{tabular}{r |r |r}
				\toprule
				$N$  & MTE* & StDTE* \\
				\midrule
				32 & $0.005\,  \pm0.009$ & $0.201\,  \pm0.010$ \\
				64 & $0.002\,  \pm0.007$ & $0.151\,  \pm0.007$ \\
				128 & $0.005\,  \pm0.005$ & $0.118\,  \pm0.006$ \\
				256 & $0.003\,  \pm0.004$ & $0.093\,  \pm0.004$ \\
				512 & $0.000\,  \pm0.003$ & $0.074\,  \pm0.003$ \\
				1024 & $0.000\,  \pm0.003$ & $0.059\,  \pm0.003$ \\
				2048 & $0.001\,  \pm0.002$ & $0.048\,  \pm0.002$ \\
				4096 & $0.001\,  \pm0.002$ & $0.041\,  \pm0.002$ \\
				\bottomrule
			\end{tabular}
		\end{subtable}
	\end{table}

The tracking error for a single path and hedging frequency \(N\) is the value of the replication portfolio plus a short position on the option, defined as
\[
\epsilon_T^N(S) :=  \texttt{value\_field}(S_0, T) + \int_0^T H_s^{(0,N)}\,dS_s^0 + \int_0^T H_s^{(1,N)}\,dS_s - h(S_T),
\]
where \((H^{(0,N)},H^{(1,N)})\) are the predictable step-processes defined as
\begin{equation}
	H^{(0,N)} := \sum_{i=0}^{N-1} \indicB{t\in [i\frac{T}{N},(i+1)\frac{T}{N})} \wh H^{(0)}_{i\frac{T}{N}},\quad 
	H^{(1,N)} :=  \sum_{i=0}^{N-1} \indicB{t\in [i\frac{T}{N},(i+1)\frac{T}{N})} H^{(1)}_{i\frac{T}{N}},\quad t\in [0,T],
\end{equation} 
where $H^{(1)} $ is the continuous strategies from Theorem~\ref{theo: main1}, written explicitly in~\eqref{eq: main TS}, and $\wh H^{(0)} $ is the strategy that satisfies $\wh H^{(0)}_0 = H^{(0)}_0 $ ($H^{(0)}_0 $ is the cash at $t=0 $ in the continuous hedging strategy, see~\eqref{eq: main TS}) and the self-financing constrain
\[
\wh H_{t_i}^{(0)} = \frac{\wh H_{t_{i-1}}^{(0)} S_{t_i}^{0} + \left( H^{(1)}_{t_{i-1}} - H^{(1)}_{t_i} \right) S_{t_i} }{ S_{t_i}^{0} }, \quad i = 0,1,\dots,N.
\]
    
    We report Monte Carlo estimates
    of the mean tracking error (MTE) and of the standard deviation of the tracking error (StDTE)
    over all simulated scenarios:
	\[
	\mathrm{MTE}^* := \frac{1}{N_{\mathrm{MC}}}\sum_{i=1}^{N_{\mathrm{MC}}}\epsilon_T^N(S^i),\qquad
	\mathrm{StDTE}^* := \sqrt{\frac{1}{N_{\mathrm{MC}}}\sum_{i=1}^{N_{\mathrm{MC}}}(\epsilon_T^N(S^i) - \mathrm{MTE}^*)^2}.
	\]
	
	While the root mean squared tracking error (RMSTE), defined by \(\sqrt{\mathbb{E}[(\epsilon_T^N(S))^2]}\), is the standard scalar metric for the tracking error (see \cite{bertsimas2000when}), we report MTE and StDTE separately as they provide more insight: MTE reveals systematic bias, while StDTE quantifies the hedging uncertainty around it. Moreover, RMSTE can be recovered from these two quantities via the relation
	\[
	\mathrm{RMSTE}^* := \sqrt{\frac{1}{N_{\mathrm{MC}}}\sum_{i=1}^{N_{\mathrm{MC}}} (\epsilon_T^N(S^i))^2} = \sqrt{(\mathrm{MTE}^*)^2 + (\mathrm{StDTE}^*)^2}.
	\]
	
	Asymptotic 95\% confidence intervals for the MTE are computed as \(\mathrm{MTE}^* \pm 1.96 \cdot \mathrm{StDTE}^* / \sqrt{N_{\mathrm{MC}}}\), while confidence intervals for the StDTE are obtained using a chi-squared approximation. These are used only to indicate sampling variability and are not intended as a rigorous statistical analysis.

	\subsection{Model Parameters and ELMM Status}
	
	We consider four configurations, summarised in Table~\ref{table:parameters}. The first three satisfy the balance condition (ELMM exists); the fourth violates it, providing a test case where the hedging framework of Theorem~\ref{theo: main1} is applied without an ELMM.
	
	\subsection{Results and Discussion}
	
	\begin{table}[htbp]
		\centering
		\caption{Model parameters and ELMM status.}
		\label{table:parameters}
		\begin{tabular}{|c|c|c|c|c|c|c|}
			\hline
			\textbf{Model} & \(\kappa_{-1}\) & \(\kappa_1\) & \(\rho_{-1}\) & \(\rho_1\) & \(r\) & \textbf{ELMM exists?} \\
			\hline
			Bachelier & 0.5 & 0.5 & 0 & 0 & 0 & Yes (trivial) \\
			Skew-Sticky 1 & 0.3 & 0.7 & 1 & 1 & 0.2 & Yes \\
			Skew-Sticky 2 & 0.7 & 0.3 & 1 & 1 & -0.2 & Yes \\
			Skew-Sticky 3 & 0.7 & 0.7 & 1 & 1 & 0 & No \\
			\hline
		\end{tabular}
	\end{table}
	
	\begin{figure}[htbp]
		\centering
		\includegraphics[width=0.5\textwidth]{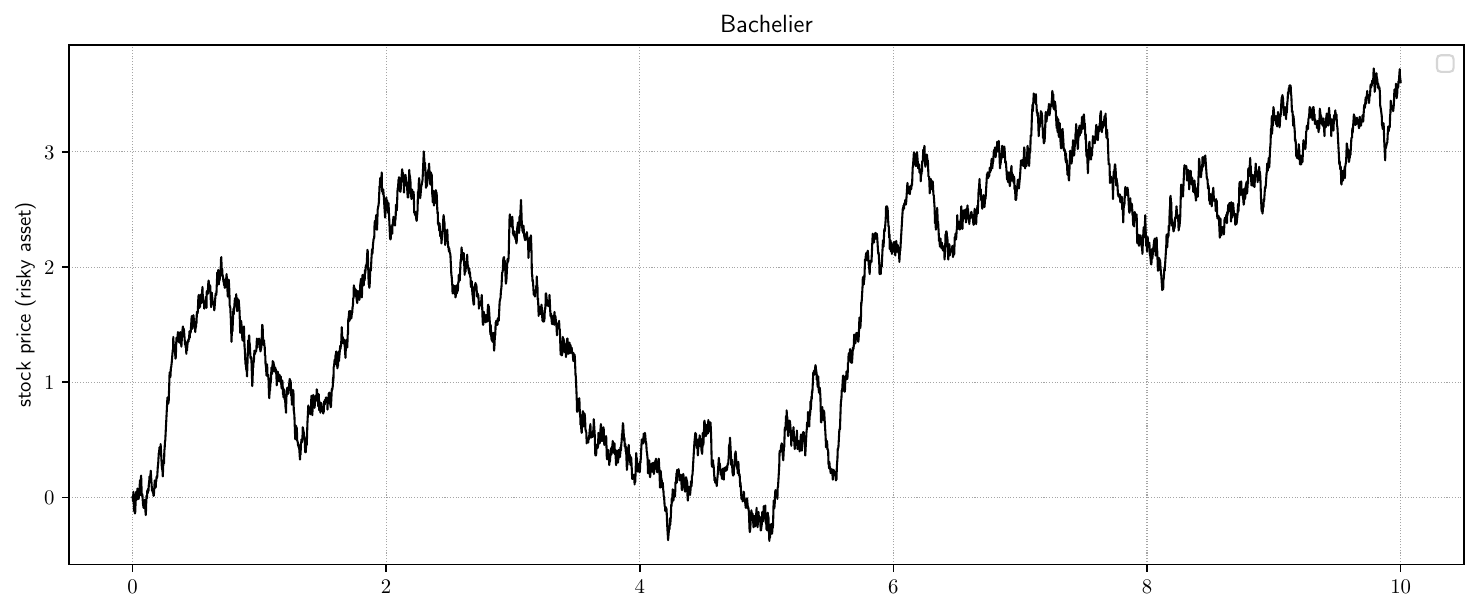}~\includegraphics[width=0.5\textwidth]{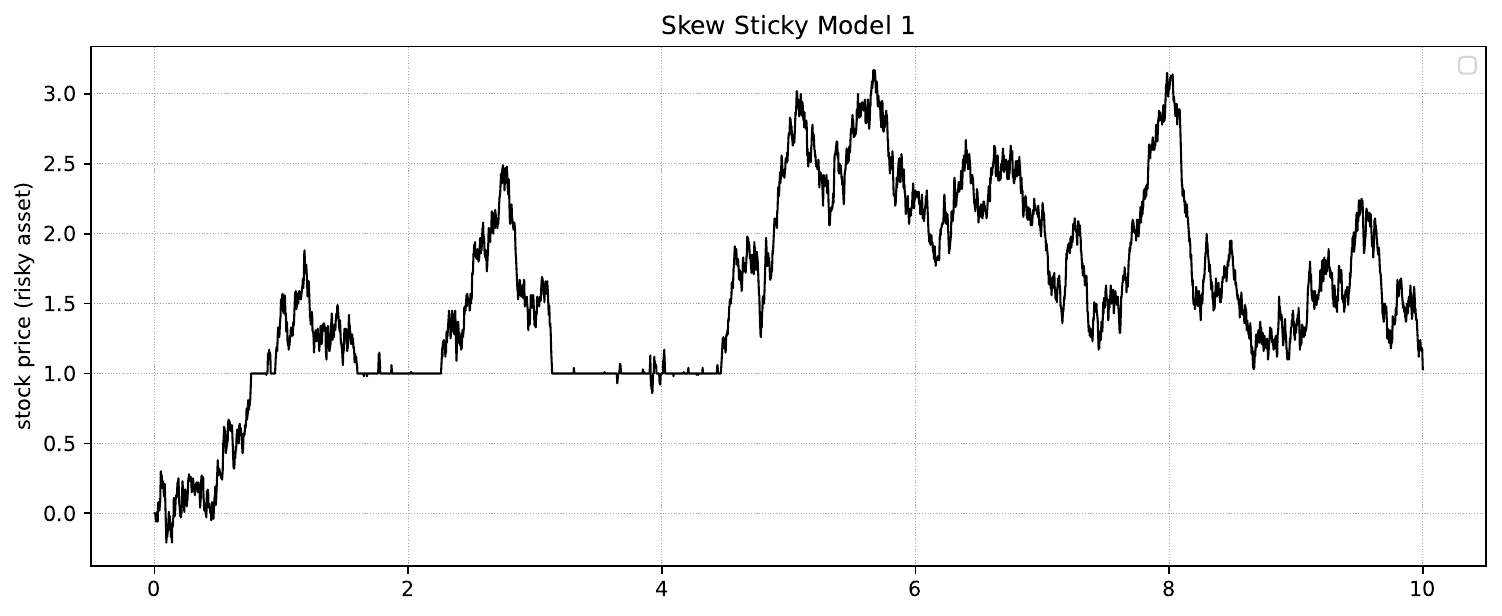}\\
		\includegraphics[width=0.5\textwidth]{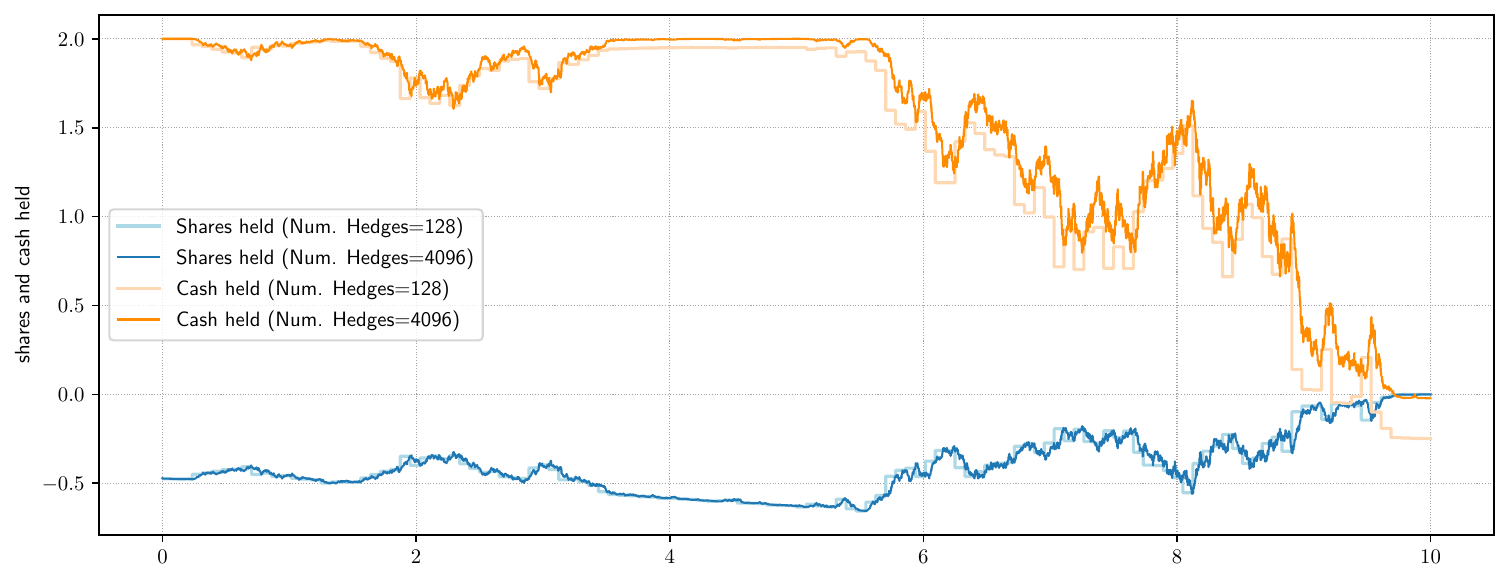}~\includegraphics[width=0.5\textwidth]{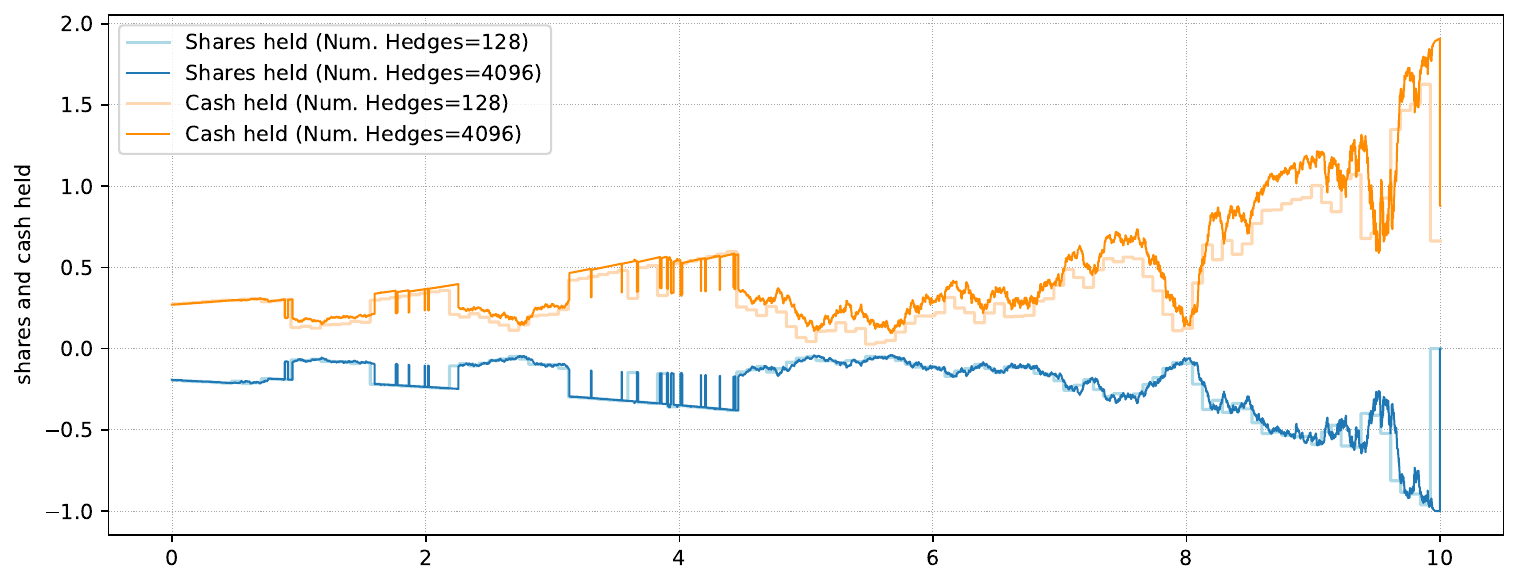}\\
		\includegraphics[width=0.5\textwidth]{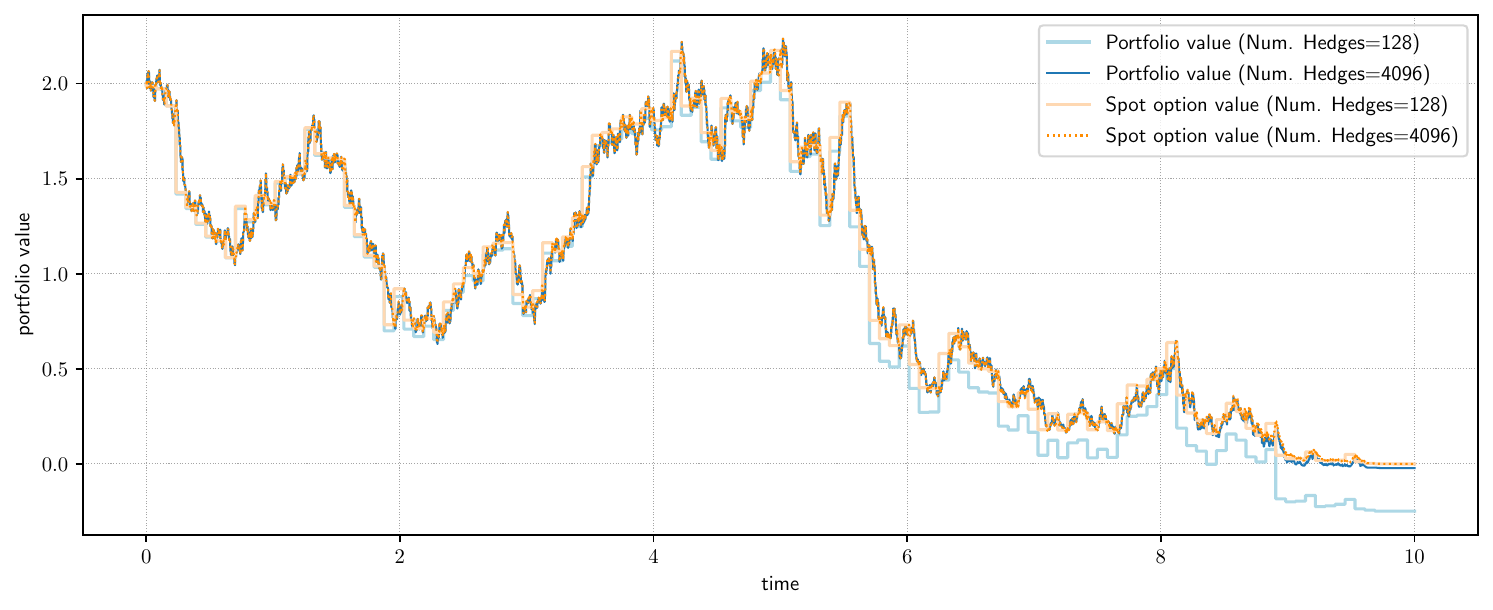}~\includegraphics[width=0.5\textwidth]{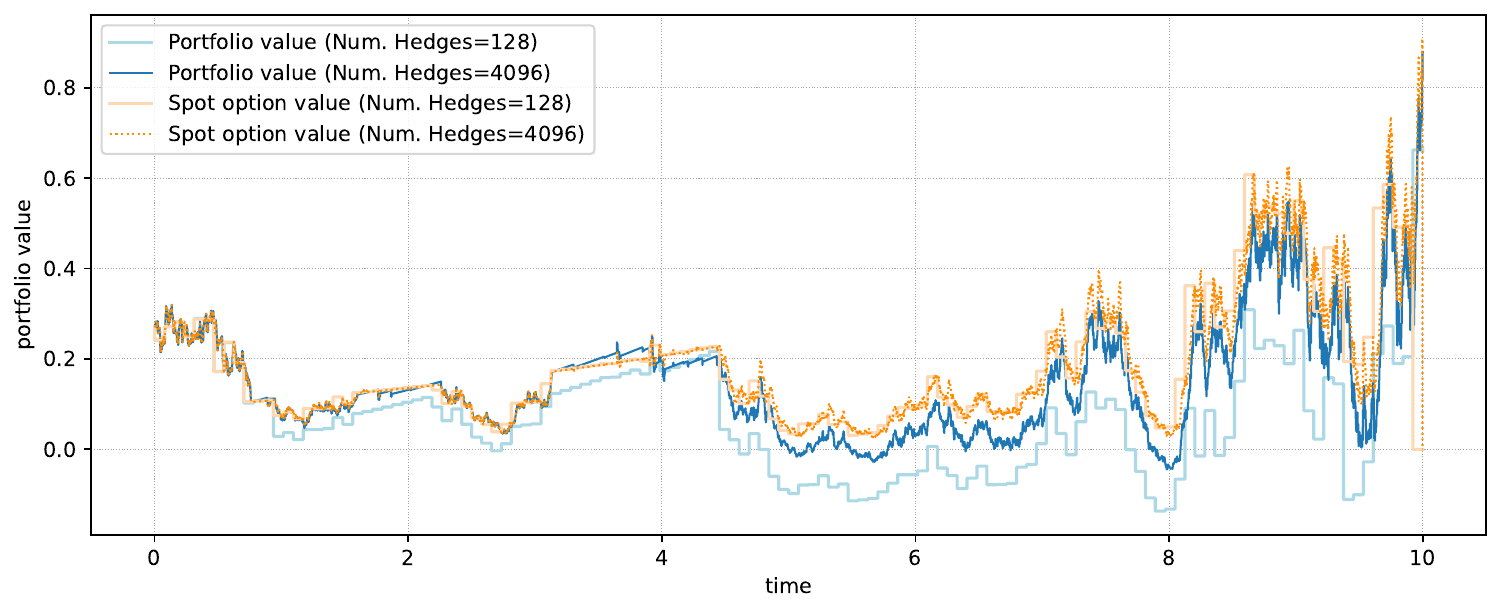}
		\caption{Single hedging scenario in the Bachelier model (left) and Skew-Sticky Model 1 (right). Rows display: (1st) price path of the risky asset, (2nd) shares of risky asset and cash holdings, (3rd) portfolio value (with premium subtracted).}
		\label{fig:single_hedge}
	\end{figure}
	
\begin{table}
	[htbp]
	\centering
	\caption{Tracking error of the single-scenario experiments depicted in Figure~\ref{fig:single_hedge}.}
	\label{table:tracking_error}
	\begin{tabular}{|c|c|c|}
		\hline
		\textbf{Model} & \textbf{Hedging instances} & \textbf{Tracking error} \\
		\hline
		Bachelier 			& 256	& -0.25 \\
		Bachelier 			& 4096  & -0.02 \\
		Skew Sticky Model~1 & 256 	& -0.31 \\
		Skew Sticky Model~2 & 4096  & -0.09 \\
		\hline
	\end{tabular}
\end{table}
	
	\begin{figure}[htbp]
		\includegraphics[width=0.35\textwidth]{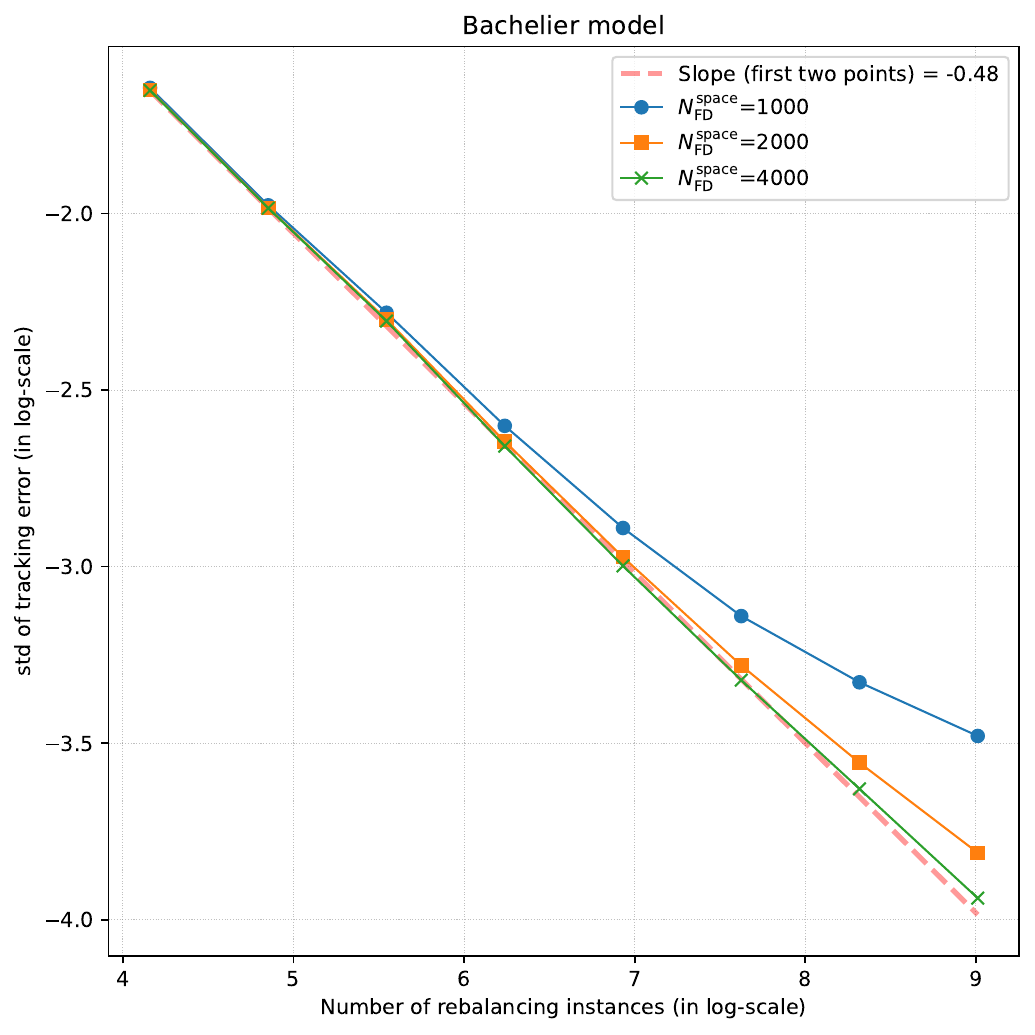}
		~\includegraphics[width=0.35\textwidth]{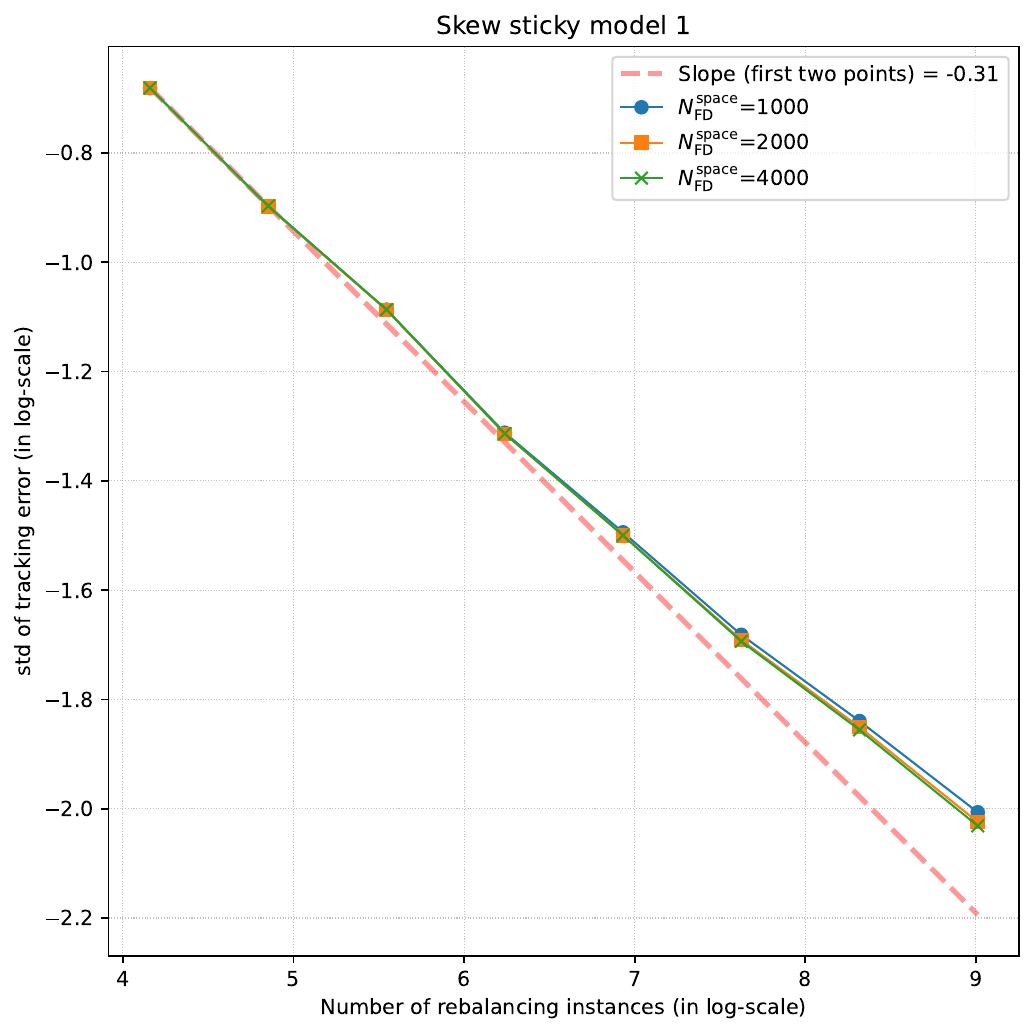}\\
		\includegraphics[width=0.35\textwidth]{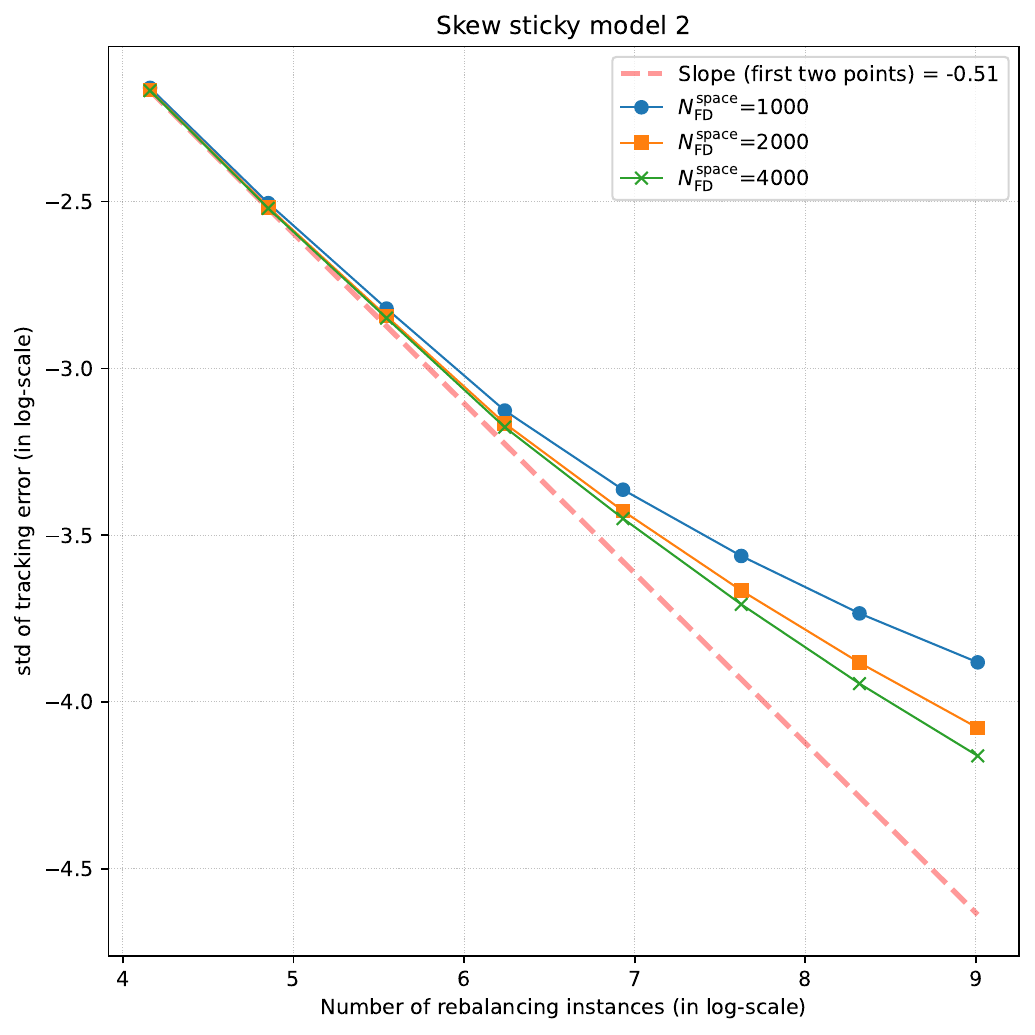}~\includegraphics[width=0.35\textwidth]{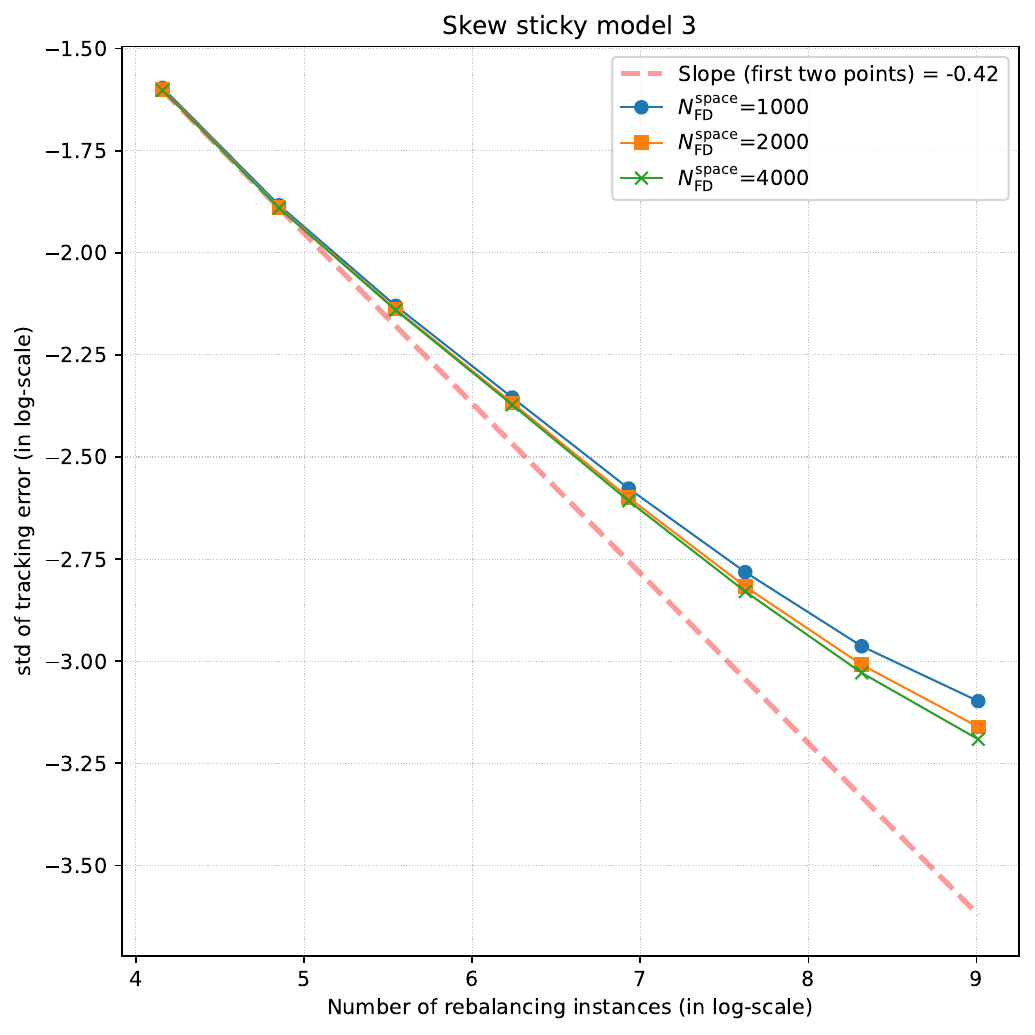}
		
		\caption{Log-log plot of the empirical standard deviation of the tracking error versus the number of hedging instances. 
		We consider various refinements of the space discretization in the finite difference approximation of the hedging PDE ($N^{\mathrm{space}}_{\FD} \in \{1000,2000,4000\} $ points).
		The time discretization in the finite difference scheme is fixed at $N^{\mathrm{time}}_{\FD} = 500$, further refinements do not change the approximated price and delta fields.
		The sample size of the Monte Carlo estimation if $N_{\mathrm{MC}} = 2000 $.
		}
		\label{fig:convergence}
	\end{figure}
	
	\begin{figure}[htbp]
		\includegraphics[width=0.85\textwidth]{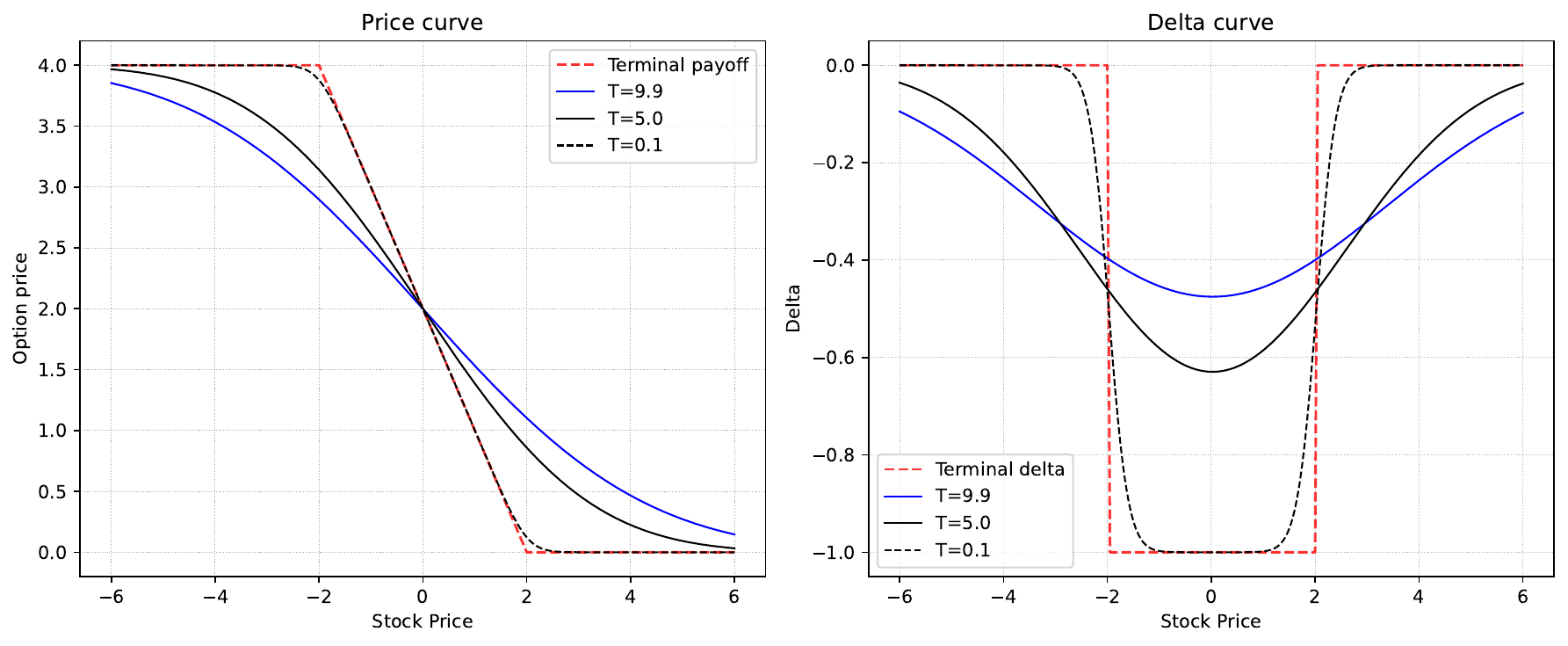}
		\\
		\includegraphics[width=0.85\textwidth]{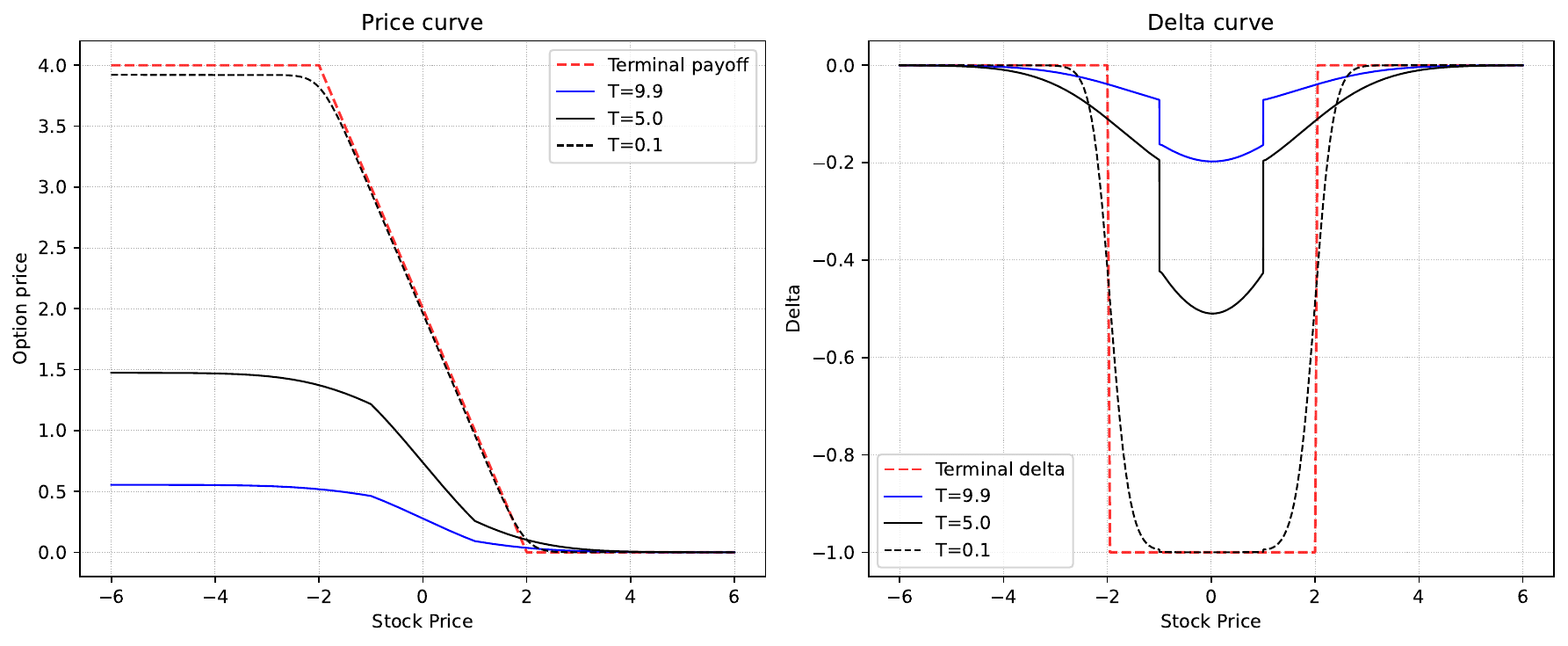}\\ 
		\includegraphics[width=0.85\textwidth]{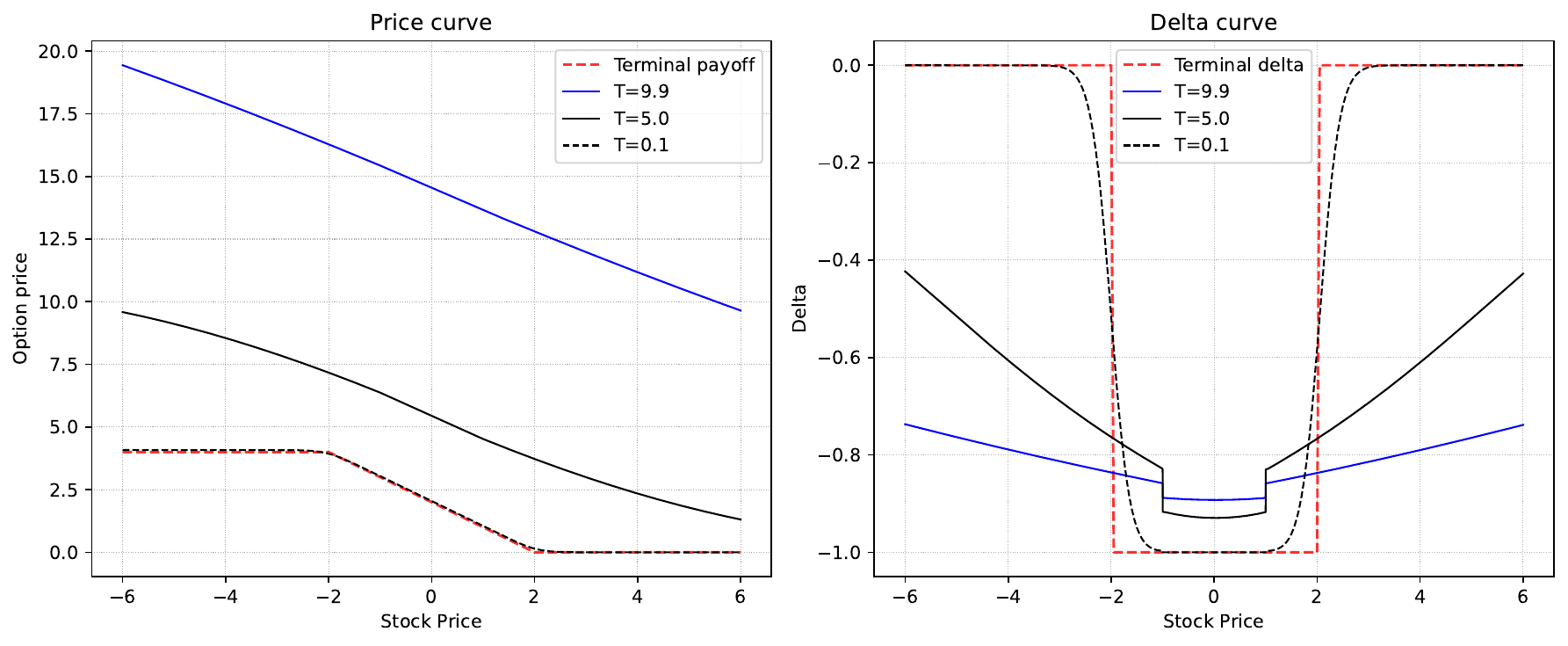}\\
		\includegraphics[width=0.85\textwidth]{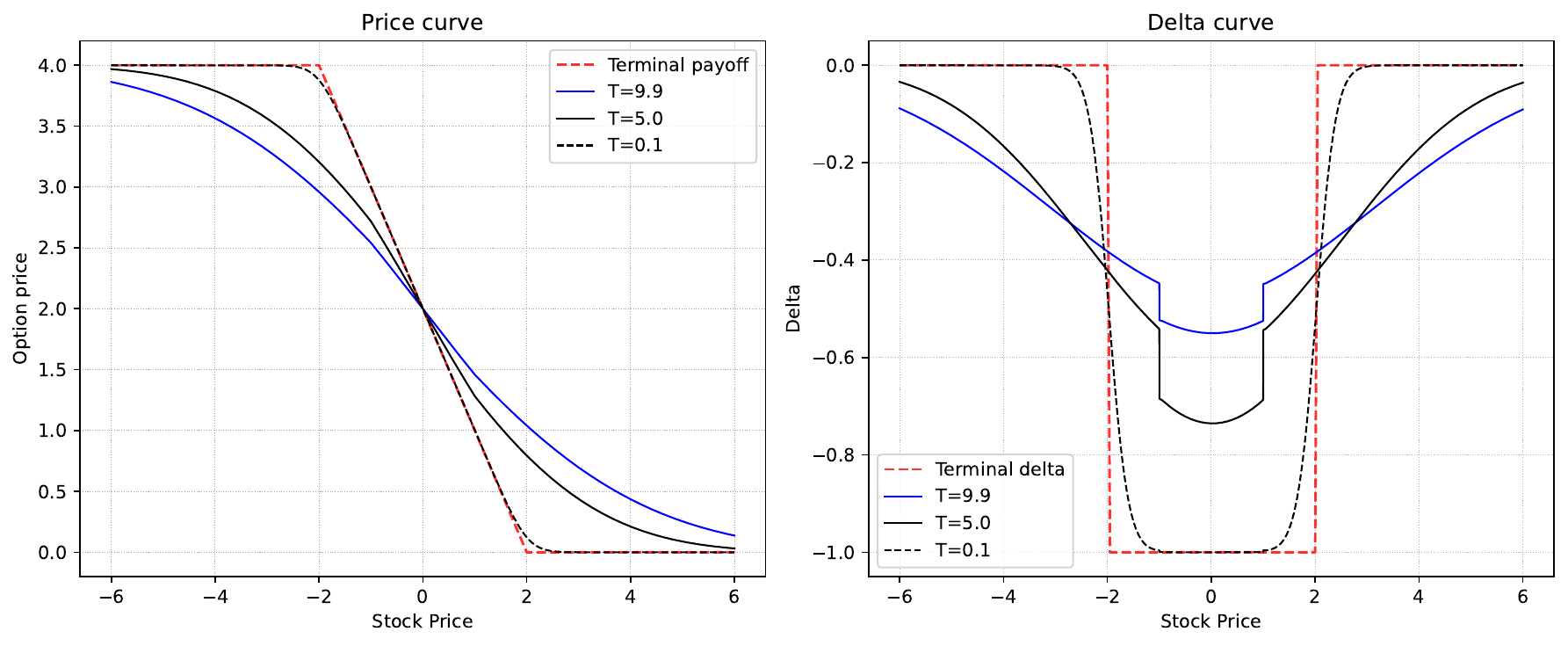}
		\caption{Finite difference approximations of price and delta curves for time horizons $T \in \{0.1, 5, 9.9\}$ in: (1st row) the Bachelier model, (2nd row) Skew-Sticky Model 1, (3rd row) Skew-Sticky Model 2, and (4th row) Skew-Sticky Model 3.}
		\label{fig:curves}
	\end{figure}
	
	\subsubsection{IFD approximation}
	
	The value and delta curves displayed in Figure~\ref{fig:curves} exhibit the expected qualitative behavior for a bear spread option across all four models. As maturity approaches, the price curve converges to the terminal payoff profile. The delta transitions smoothly from near zero in the out-of-the-money region to approximately $-1$ in the in-the-money region.
	
	At the skew-sticky points $x = \pm 1$, the delta exhibits characteristic jumps of~\eqref{eq: skew_sticky:hedging_equation}.
	
	\subsubsection{Performance of the Hedging Strategy}
	
	Table~\ref{tab:results} reports the Monte Carlo estimates of MTE* and StDTE* for all models and rebalancing frequencies. For all models, the mean tracking error remains close to zero (typically \(|\mathrm{MTE}^*|<0.07\)), confirming that the discrete self-financing strategy derived from the solution to~\eqref{eq: hedging PDE} is  replicates the option payoff on average. The standard deviation StDTE* decreases monotonically with \(N\), reflecting improved risk control through more frequent rebalancing. These observations are fully consistent with Theorem~\ref{theo: main1}.

	The Bachelier model serves as a smooth benchmark; its StDTE* decays roughly as \(N^{-1/2}\), as expected for classical diffusion models. The skew-sticky models exhibit larger StDTE* values, particularly for outward‐pointing skew (Model 1), indicating that the singular features of the diffusion introduce additional hedging errors. Model 2 (inward‐pointing skew) shows StDTE* values even lower than the Bachelier model at many frequencies, likely because the negative interest rate \(r=-0.2\) reduces the effective risk‐neutral volatility. 
	These qualitative differences highlight the influence of the scale and speed characteristics on the hedging error, as well as that of the interest rate (see Table~\ref{tab:ss1}).
	
	Figure~\ref{fig:single_hedge} illustrates the path-properties of the portfolio composition and value depending on the price of the risky asset in the Bachelier model and Skew-Sticky Model~1 for a single scenario with 4096 and 256 hedging instances (i.e.,
	10/4096 and 10/256 hedging windows).
	Tracking errors of these scenarios are shown in Table~\ref{table:tracking_error}. 
	
	\subsubsection{Robustness in the Absence of NFLVR}
	
	Skew-Sticky Model 3 violates the ELMM condition, yet the hedging algorithm achieves near‐zero MTE* and StDTE* values comparable to the Bachelier model for the same hedging frequencies. This provides empirical support for Theorem~\ref{theo: main1}, which states that the hedging strategy~\eqref{eq: main hedging strategy} is consistent irrespective of the existence of an ELMM. Although the initial capital may exceed the minimal hedging capital (as discussed in Section~\ref{ssec: skew_sticky}), the strategy still replicates the payoff with small tracking errors.
	
	\subsubsection{Convergence Trends}
	
	A log‐log plot of StDTE* versus \(N\) (Figure \ref{fig:convergence}) indicates that the decay is approximately \(N^{-1/2}\) for the Bachelier model, as expected from classical theory. For the skew-sticky models, the decay appears slower, especially for Model 1, which we attribute to the non‐smooth behaviour at the thresholds. A precise asymptotic analysis is beyond the scope of this illustrative study; we only note that the observed trends are qualitatively consistent with the theoretical framework.
	
	\subsection{Summary of Numerical Findings}
	
	The numerical experiments illustrate the main theoretical results. The discrete hedging strategy derived from the solution to the hedging PDE~\eqref{eq: hedging PDE} yields near‐zero mean tracking errors, and the standard deviation decreases with more frequent rebalancing. The strategy remains effective even when NFLVR fails, supporting the thesis of Theorem~\ref{theo: main1}. The qualitative differences across models reflect the influence of the scale and speed characteristics on the hedging error. These findings provide empirical confidence in the theoretical framework.
	
	\begin{algorithm}[htbp]
		\small
		\caption{Delta Hedging of an Option}\label{algo_deltahedging}
		\KwData{Stock price time-series: $(S_t)_{t \geq 0}$, Interest rate: $r$, Value field: $(t,x)\mapsto \texttt{value\_field}(t,x)$, Delta field: $(t,x)\mapsto \texttt{delta\_field}(t,x)$, Payoff function: $x\mapsto h(x)$, Time to maturity: $T$, Hedging window: $\Delta t$}
		\KwResult{Replication portfolio Profit and Losses (PnL), Tracking error}
		\medskip
		\tcp{INITIALIZATION}
		$t \leftarrow 0$\;
		$V_0 \leftarrow \texttt{value\_field}(S_0, T)$\;
		$\delta \leftarrow \texttt{delta\_field}(S_0, T)$\;
		$cash \leftarrow V_0 - (\delta \times S_0)$\;
		$shares\_held \leftarrow \delta$\;
		\medskip
		\tcp{MAIN LOOP}
		\While{$t < T$}{
			$t \leftarrow t + \Delta t$\;
			Retrieve current stock price $S_t$\;
			$cash \leftarrow cash \times e^{r \Delta t}$; \tcp*[f]{Accrue interest}\;
			$\delta \leftarrow \texttt{delta\_field}(S_t, T-t)$ \;
			$\Delta Shares \leftarrow \delta - shares\_held$ \;
			$cash \leftarrow cash - (\Delta Shares \times S_t)$; \tcp*[f]{Finance the share trade}\;
			$shares\_held \leftarrow \delta$; \tcp*[f]{Update share count}\;  
			$portfolio\_value \leftarrow cash + (shares\_held \times S_t)$; \tcp*[f]{Market value}\;
			Append $\delta$ to the list of deltas\;
			Append $portfolio\_value$ to the list of portfolio values\;
		}
		\medskip
		\tcp{FINAL STEP AT MATURITY ($t = T$)}
		$cash \leftarrow cash + (shares\_held \times S_T)$; \tcp*[f]{Liquidate stock position}\;
		\medskip
		\Return{PnL: $cash - \text{\upshape\texttt{value\_field}}(S_0,T)$, Tracking error: $h(S_T) - cash$}\;
	\end{algorithm}
	
	\section{Main proofs}
	\label{sec: proofs}
	
	\subsection{Proof of Theorem~\ref{theo: main1}}
	
	{\em Step 1:} 
	Let us first consider the function 
	\[
	\rr (t, x) := e^{r (T - t)} u (t, \g^{-1} (x)) = E^{Q_{\g^{-1} (x)}} \big[ h (\X_{T - t}) \big] = E^{P_x} \big[ h (\g^{-1} (\X_{T - t})) \big] 
	\]
	(recall~\eqref{eq:220826a3}).
	Because \(h \in L^\infty (\om)\), we have \(h \circ \g^{-1} \in L^\infty (\tm)\). The function \(\rr\) is therefore ``structurally comparable'' to \(u\), but related to a diffusion on natural scale (namely, \((x \mapsto P_x)\)). Hence, we can deduce properties of \(\rr\) from fundamental results by H.~P.~McKean \cite{MK_56}. 
	To wit, thanks to \cite[Corollaries~4.3 and 4.4]{MK_56}, \(u\) is continuous on \([0, T) \times G(J)\), the time-derivative \(\rr_t\) exists as a continuous function on \([0, T) \times \g(J)\), the left-derivative \(\partial^-_x \rr\) exists and 
	\begin{align} \label{eq: PDE McKean r}
		\partial^-_x \rr (t, x) - \partial^-_x \rr (t, y) = - \int_{[y, x)} 2\rr_t (t, z) \, \tm (\rd z), \quad t \in [0, T), \, x, y \in \g (J), \, y < x.
	\end{align} 
	Of course, using that \(\g\) is continuous, it follows directly that 
	\[
	u_t (t, x) = e^{- r (T - t)} \rr_t (t, \g (x)) + r e^{- r (T - t)} \rr (t, \g (x))
	\] 
	exists as a continuous function on \([0, T) \times J\), and 
	\begin{align} \label{eq: Z1}
		\int_{[ \g (y), \g (x))} \rr_t (t, z) \, \tm (\rd z) = \int_{[y, x)} e^{r (T - t)} \big( u_t (t, z) - r u (t, z) \big) \, \om (\rd z), \quad x, y \in J, \, y < x.
	\end{align} 
	Further, as \(\g\) is strictly increasing with left-hand derivative~\(\g'_-\), the left-hand derivative \(\partial^-_x u\) exists with
	\begin{align*}
		\partial^-_x u (t, x) 
		&= \lim_{h \searrow 0} e^{- r (T - t)} \frac{ \rr (t, \g (x)) - \rr (t, \g (x - h))}{\g (x) - \g (x - h)} \frac{\g (x) - \g (x - h)}{h} 
		\\&= e^{- r (T - t)} \partial^-_x \rr (t, \g(x)) \g'_- (x). 
	\end{align*} 
	Putting these pieces together, for all \(x, y \in J\) with \(y < x\), we obtain that 
	\begin{align*} 
		\frac{\partial^- u}{\partial G} (t, x) - \frac{\partial^- u}{\partial G} (t, y) &= \frac{\partial^-_x u (t, x)}{\g'_- (x)} - \frac{\partial^-_x u (t, y)}{\g'_- (y)} 
		\\&= e^{- r (T - t)} \big( \partial^-_x \rr (t, \g (x)) - \partial^-_x \rr (t, \g (y)) \big)
		\\&= - \int_{[\g (y), \g (x))} 2 e^{- r (T - t)} \rr_t (t, z) \, \tm (\rd z)
		\\&= - \int_{[ y, x)} 2 \big( u_t (t, z) - ru (t, z) \big) \, \om (\rd z).
	\end{align*} 
	To conclude that \(u\) is a good solution to \eqref{eq: backward PDE}, it remains to prove that \(\partial^-_x u\) is locally bounded on \([0, T) \times J\), continuous in the first and LCRL in the second variable (when the others remain fixed). By the continuity and monotonicity of \(\g\) and the fact that \(x \mapsto \g'_- (x)\) is LCRL, the formula \(\partial^-_x u (t, x) = e^{- r (T - t)} \partial^-_x \rr (t, \g (x)) \g'_-(x)\) shows that it suffices to show that \(\partial^-_x \rr\) is locally bounded on \([0, T) \times G (J)\), continuous in the first and LCRL in the second variable. This is the program for the remainder of this step. First, the LCRL property in the second variable follows from \eqref{eq: PDE McKean r}, the fact that \(\tm\) is locally finite on \(G(J)\), and the dominated convergence theorem. We proceed with the continuity in the first variable. 
	We take an arbitrary \(x_0 \in \g(J)\) such that \(\tm (\{x_0\}) = 0\). Of course, such a point must exist. Furthermore, we take two points \(a, b \in \g(J)\) such that \(a < x_0 < b\) and set \(T_{a, b} := \inf \{ t \geq 0 \colon \X_t = a \text{ or } \X_t = b \}\). Finally, take a time point \(t_0 \in [0, T)\). 
	We notice that \(x \mapsto \rr (t_0, x)\) is in the domain of the (extended) generator of \((x \mapsto P_x)\), which follows from results in \cite[Section~2.7, p. 131]{freedman}. Consequently, by Dynkin's formula (\cite[Lemma~48, p. 119]{freedman}) and standard properties of the speed measure (\cite[Lemma~68, p. 128]{freedman}), 
	\[
	E^{P_{x_0}}\big[ \rr (t_0, \X_{T_{a, b}}) \big] = \rr (t_0, x_0) - \int_a^b G_{a, b} (x_0, y) \rr_t (t_0, y) \, \tm(\rd y), 
	\]
	where \(G_{a, b}\) is the Green function given by 
	\[
	G_{a, b} (x, y) := \frac{ 2 (x \wedge y - a) (b - x \vee y)}{b - a}, \quad x, y \in [a, b]. 
	\] 
	With
	\[
	E^{P_{x_0}} \big[ \rr (t_0, \X_{T_{a, b}}) \big] = \rr (t_0, b) \frac{x_0 - a}{b - a} + \rr (t_0, a) \frac{b - x_0}{b - a}, 
	\]
	we conclude that 
	\begin{align*}
		\rr (t_0, b) (x_0 - a) + \rr (t_0, a) (b - x_0) = (b - a) \Big( \rr (t_0, x_0) - \int_a^b G_{a, b} (x_0, y) \rr_t (t_0, y) \, \tm(\rd y) \Big). 
	\end{align*}
	For \(\varkappa > 0\) such that \(a < x_0 - \varkappa\), we get in the same way that 
	\begin{align*}
		\rr (t_0, b) (x_0 - \varkappa - a) &+ \rr (t_0, a) (b - x_0 + \varkappa) 
		\\&= (b - a) \Big( \rr (t_0, x_0 - \varkappa) - \int_a^b G_{a, b} (x_0 - \varkappa, y) \rr_t (t_0, y) \, \tm(\rd y) \Big). 
	\end{align*}
	Subtracting these equations, we get that 
	\begin{align*} 
		\rr (t_0, b) - \rr (t_0, a) = (b - a) \Big( &\frac{ \rr (t_0, x_0) - \rr (t_0, x_0 - \varkappa) }{\varkappa} \\&+ \int_a^b \frac{G_{a, b} (x_0 - \varkappa, y) - G_{a, b} (x_0, y) }{\varkappa} \, \rr_t (t_0, y) \, \tm(\rd y) \Big), 
	\end{align*}  
	and taking \(\varkappa \to 0\) we arrive at 
	\begin{align*}
		\partial^-_x \rr (t_0, x_0) = \frac{\rr(t_0, b) - \rr (t_0, a)}{b - a} &- 2 \int_{[a, x_0)} \frac{(y - a)}{(b - a)} \, \rr_t (t_0, y) \,\tm (\rd y) 
		\\&+ 2 \int_{[x_0, b)} \frac{ (b - y)}{(b - a) } \, \rr_t (t_0, y) \, \tm (\rd y). 
	\end{align*}
	We learned the strategy of getting this formula from the proof of \cite[Theorem~16.64]{breiman1968probability}.
	Recalling that \(\rr_t\) is continuous on \([0, T) \times G (J)\), this equation proves that \(t \mapsto \partial^-_x \rr (t, x_0)\) is continuous on~\([0, T)\).
	Using \eqref{eq: PDE McKean r}, for any \(t \in [0, T)\) and \(x \in \g(J)\), we find that 
	\begin{align*} 
		\partial^-_x \rr (t, x \vee x_0) & =  \partial^-_x \rr (t, x \vee x_0) - \partial^-_x \rr (t, x \wedge x_0) + \partial^-_x \rr (t, x \wedge x_0) 
		\\&= - \int_{[ x \wedge x_0, x \vee x_0)} 2 \rr_t (t, z) \, \tm (\rd z) + \partial^-_x \rr (t, x \wedge x_0).
	\end{align*} 
	By dominated convergence, and using the fact that \(t \mapsto \partial^-_x \rr (t, x_0)\) is continuous, this implies that \(t \mapsto \partial^-_x \rr (t, x)\) is continuous on \([0, T)\) for all \(x \in G (J)\). Finally, we prove the local boundedness on \([0, T) \times G (J)\). Take \(t \in [0, T)\) and \([a, b] \subset G (J)\). Then, again by \eqref{eq: PDE McKean r}, for every \(s \in [0, t]\) and \(x \in [a, b]\), we obtain that 
	\begin{align*} 
		| \partial^-_x \rr  (s, x) | &= \Big| \partial^-_x \rr (s, a) - \int_{[a, x)} 2 \rr_t (s, z) \, \tm (\rd z) \Big| 
		\\&\leq \sup_{h \in [0, t]}| \partial^-_x \rr (h, a) | + 2 \sup_{\substack{h \in [0, t] \\ z \in [a, b]}} | \rr_t (h, z) | \, \tm ([a, b]) < \infty, 
	\end{align*} 
	where finiteness follows from the fact that \(s \mapsto \partial^-_x \rr (s, a)\) is continuous on \([0, T)\), \(\rr_t\) is continuous on \([0, T) \times G (J)\), and \(\tm\) is locally finite on \(G (J)\). 
	In summary, part (a) is proved.
	
	\smallskip
	{\em Step 2:}
	Next, we establish part (b). 
	Let \(v\) be a good solution in the sense of Definition~\ref{def: good solution}. Then, for every \(\varepsilon \in (0, T)\), the mapping \([0, T - \varepsilon] \times G(J) \ni (t, x) \mapsto \overline{v} (t, x) := e^{r (T - t)} v (t, G^{-1}(x))\) satisfies the prerequisites of the It\^o-type formula \cite[Theorem~4.1, Definition~3.3]{Wil_18}, which can be viewed as a semimartingale version of the Az{\'e}ma--Jeulin--Knight--Yor~\cite{AJKY_98} extension of the generalized It\^o formula to space-time functions of Brownian motion. 
	At this point, we remark that \cite[Theorem~4.1]{Wil_18} is formulated for \(\bR\)-valued semimartingales and functions defined on \(\bR_+ \times \bR\), but the usual localization argument shows that we can also apply it to \(\g (J)\)-valued semimartingales and functions defined on \([0, T - \varepsilon] \times \g (J)\). 
	Take \(x_0 \in J\) and let \(\varepsilon \in (0, T)\). From part (d) of Definition~\ref{def: good solution}, for all \(t \in [0, T)\) and \(x, y \in G(J)\) with \(x > y\), we get that 
	\begin{equation} \label{eq: second derivative measure in proof}
		\begin{split}
			\partial^-_x \overline{v} (t, x) - \partial^-_x \overline{v} (t, y) &= e^{r (T - t)} \Big( \, \frac{\partial^- v}{\partial G} (t, G^{-1} (x)) - \frac{\partial^- v}{\partial G} (t, G^{-1} (y)) \Big) \phantom \int
			\\&= e^{r (T - t)} \int_{[G^{-1}(y), G^{-1} (x))} 2 \big( r v (t, z) - v_t (t, z)\big) \, \om (\rd z)
			\\&= - \int_{[y,x)} 2 \overline{v}_t (t, z) \, \tm (\rd z).
		\end{split}
	\end{equation} 
	Running \(t\) through \([0, T - \varepsilon]\), using \cite[Theorem~4.1, Definition~3.3]{Wil_18} with \eqref{eq: second derivative measure in proof}, \cite[Exercise~VI.1.23]{RY} and \cite[Theorem~V.49.1]{RW2}, we obtain that
	\begin{alignat}{2} 
		\rd e^{r (T - t)} v (t, \Y_t) &= \overline{v}_t (t, \g(\Y_t)) \vd t &&+\partial^-_x \overline{v} (t, \g(\Y_t)) \vd  \g (\Y_t) - \frac{1}{2} \int_{\g(J)} 2\overline{v}_t (t, y) \vd L^y_t (\g(\Y)) \, \tm (\rd y)
		\\&=  \overline{v}_t (t, \g(\Y_t)) \vd t &&+ \partial^-_x \overline{v} (t, \g(\Y_t)) \vd  \g (\Y_t) 
		\\&&&- \int_{J} \overline{v}_t (t, \g (x)) \vd L^{\g (x)}_t (\g (\Y)) \frac{\s'_+ (x)}{\G (x)} \, \m (\rd x) 
		\\&= \overline{v}_t (t, \g(\Y_t)) \vd t &&+ \partial^-_x \overline{v} (t, \g(\Y_t)) \vd  \g (\Y_t) \label{eq: important computation uniqueness}
		\\&&&- \int_{\s (J)} \overline{v}_t (t, \g (\s^{-1} (x))) \vd L^{x}_t (\s(\Y)) \, \m \circ \s^{-1} (\rd x) 
		\\&= \overline{v}_t (t, \g(\Y_t)) \vd t &&+ \partial^-_x \overline{v} (t, \g(\Y_t)) \vd  \g (\Y_t) - \overline{v}_t (t, \g (\Y_t)) \vd t \phantom \int 
		\\&= \partial^-_x \overline{v} (t, \g(\Y_t)) \vd  &&\g(\Y_t).  \phantom \int
	\end{alignat}
	Recall that \(\g\) is a dc function and 
	\[
	\g'_- (x) - \g'_- (y) = \int_{[y, x)} 2 \g'_- (z) \, \nu (\rd z), \quad y, x \in J, y < x.
	\] 
	Applying the generalized It\^o formula \cite[Theorem~VI.1.5]{RY}, the formula from \cite[Exercise~VI.1.23]{RY}, and using the occupation times formula for diffusions \cite[Theorem~V.49.1]{RW2}, we obtain that 
	\begin{align} 
		\rd \g (\Y_t) 
		&= \g'_- (\Y_t) \vd  \Y_t + \int_{J} \rd L^x_t (\Y) \g'_- (x) \, \hm (\rd x) 
		\\&= \g'_- (\Y_t) \vd  \Y_t - \int_{J} \rd L^x_t (\Y) \s'_+ (x) \, r x \g'_- (x) \, \m (\rd x) 
		\\&= \g'_- (\Y_t) \vd  \Y_t - \int_J \rd L^{\s (x)}_t (\s(\Y)) r x \g'_- (x) \, \m (\rd x) 
		\\&= \g'_- (\Y_t) \vd  \Y_t - \int_{\s (J)} \rd L^{x}_t (\s(\Y)) r \s^{-1} (x) \g'_- (\s^{-1} (x)) \, \m \circ \s^{-1} (\rd x) 
		\\&= \g'_- (\Y_t) \vd  \Y_t - r \s^{-1} (\s (\Y_t)) \g'_- (\s^{-1} (\s (\Y_t))) \vd  t \phantom \int
		\\&= \g'_- (\Y_t) \vd  \Y_t - r \Y_t \g'_- (\Y_t) \vd t.\phantom \int \label{eq: g (Y) dynamics} 
	\end{align}
	Combining \eqref{eq: g (Y) dynamics} with \eqref{eq: important computation uniqueness}, we obtain that, on \([0, T - \varepsilon]\), 
	\begin{align*} 
		\rd e^{r (T - t)} v (t, \Y_t) &= \partial^-_x \overline{v} (t, \g (\Y_t)) \g'_- (\Y_t) \vd \Y_t - \partial^-_x  \overline{v} (t, \g (\Y_t)) r \Y_t \g'_- (\Y_t) \vd t
		\\&= e^{r (T - t)} \Big( \partial^-_x v (t, \Y_t) \vd \Y_t - r \partial^-_x v (t, \Y_t) \Y_t \vd t \Big), 
	\end{align*}
	which yields that 
	\begin{align*}
		\rd v (t, \Y_t) &= \rd  e^{rt} e^{- rt} v (t, \Y_t) = e^{rt} \vd (e^{- rt} v (t, \Y_t) ) + e^{- rt} v (t, \Y_t) \vd \Y^0_t 
		\\&= e^{rt} e^{- rt} \Big( \partial^-_x v (t, \Y_t) \vd \Y_t - r \partial^-_x v (t, \Y_t) \Y_t \vd t \Big) + e^{- rt} v (t, \Y_t) \vd \Y^0_t 
		\\&= \partial^-_x v (t, \Y_t) \vd \Y_t - \partial^-_x v (t, \Y_t) \Y_t e^{- rt} \vd \Y^0_t + e^{- rt} v (t, \Y_t) \vd \Y^0_t 
		\\&= \partial^-_x v (t, \Y_t) \vd \Y_t + \frac{v (t, \Y_t) - \partial^-_x v (t, \Y_t) \Y_t}{\Y^0_t} \vd \Y^0_t.
	\end{align*} 
	As \(\varepsilon \in (0, T)\) is arbitrary, this gives the claimed formula \eqref{eq: 1st hedge}. 
	For the final claim, take \(h \in C_b (\J)\). As \(u\) is a good solution by part (a), the formula  \eqref{eq: 1st hedge} also holds with \(v\) replaced by \(u\).
	As \(h \in C_b (\J)\), the map
	\[
	\bR_+ \times J^* \ni (t, x) \mapsto E^{Q_x} \big[ h (\X_t) \big]
	\] 
	is continuous by \cite[Lemma~30, p.~116]{freedman}. Thus, sending \(t \to T\), we find that a.s. 
	\[
	u (t, \Y_t) = e^{- r (T - t)} E^{Q_{\Y_t}} \big[ h (\X_{T - t}) \big] \to E^{Q_{\Y_T}} \big[ h (\X_0) \big] = h (\Y_T), 
	\] 
	which implies the formula \eqref{eq: 2nd hedge}. The proof is complete. \qed

	\subsection{Proof of Theorem~\ref{theo: uniqueness pricing eq}}
	
	Let \(v\) be a bounded continuous good solution to \eqref{eq: backward PDE} with \(v (T, x) = h (x)\) for all \(x \in \I\), and take \(x_0 \in J, t_0 \in [0, T)\) and \(\varepsilon \in (0, T - t_0)\). 
	Then, as in~\eqref{eq: important computation uniqueness},\footnote{We use the assumption \(\J = \I\) to apply the It\^o formula from \cite[Theorem~4.1]{Wil_18}.}  running \(t\) through \([0, T - \varepsilon - t_0]\), we obtain \(Q_{x_0}\)-a.s.
	\begin{align*}
		\rd v &(t_0 + t, \X_t) 
		\\&= v_t (t_0 + t, \X_t) \vd t + \frac{\partial^-_x v (t_0 + t, \X_t)}{\g'_- (\X_t)} \vd \g(\X_t) 
		\\&\hspace{1cm} + \int_{\g (J)} \, (r v (t_0 + t, \g^{-1} (z)) - v_t (t_0 + t, \g^{-1} (z)) ) \vd L^z_t (\g(\X)) \, \om \circ \g^{-1} (\rd z)
		\\ &= v_t (t_0 + t, \X_t) \vd t + \frac{\partial^-_x v (t_0 + t, \X_t)}{\g'_- (\X_t)} \vd \g(\X_t) + (r v (t_0 + t, \X_t) - v_t (t_0 + t, \X_t) ) \vd t
		\\&= \frac{\partial^-_x v (t_0 + t, \X_t)}{\g'_- (\X_t)} \vd \g(\X_t) + rv (t_0 + t, \X_t) \vd t.
	\end{align*}
	Hence, integration by parts yields that \(Q_{x_0}\)-a.s. 
	\[
	\rd \big(e^{r (T - t_0 - t)} v (t_0 + t, \X_t)\big) = e^{r (T - t_0 - t)} \frac{\partial^-_x v (t_0 + t, \X_t)}{\g'_- (\X_t)} \vd \g(\X_t).
	\]
	As \(\g (\X)\) is a continuous local \(Q_x\)-martingale (by \cite[Corollary~V.46.15]{RW2}), the process \[(e^{r (T - t_0 - t)} v (t_0 + t, \X_t))_{t \leq T - \varepsilon - t_0}\] is a bounded local \(Q_{x_0}\)-martingale, hence a true \(Q_{x_0}\)-martingale. Consequently, 
	\[
	e^{r (T - t_0)} v (t_0, x_0) = E^{Q_{x_0}} \big[ e^{r \varepsilon} v (T - \varepsilon, \X_{T - \varepsilon - t_0}) \big].        
	\]
	By the assumed continuity of \(v\) (as function on \([0, T] \times J\)), letting \(\varepsilon \to 0\) and using the dominated convergence theorem, 
	\[
	e^{ r (T - t_0)} v (t_0, x_0) = E^{Q_{x_0}} \big[ v (T, \X_{T - t_0}) \big] = E^{Q_{x_0}} \big[ h (\X_{T - t_0}) \big]. 
	\]
	As \((t_0, x_0)\) was arbitrary, this proves that the PDE \eqref{eq: backward PDE} has at most one solution. To complete the proof, it remains to understand that the fundamental value function is continuous on \([0, T] \times J\). This follows from \cite[Lemma~30, p.~116]{freedman}, as we assume \(h \in C_b (\I)\). \qed

	\subsection{Proof of Theorem~\ref{theo: ELMM}}
	
	{\em Step 1:} Take a function \(f \in C_b ( \g (\I))\) that is a dc function with \(\rd  f'_+ = 2k \vd  \tm\) for \(k \in C_b (\g (J))\), i.e., 
	\[
	f'_+ (x) - f'_+ (y) = \int_{(y, x]} 2k (z) \, \tm (\rd z), \quad \forall \, x, y \in \g (\I) \text{ with } y < x. 
	\] 
	By the generalized It\^o formula \cite[Theorem~ VI.1.5]{RY}, \cite[Exercise~VI.1.23]{RY} and the occupation time formula for diffusions \cite[Theorem~V.49.1]{RW2}, we obtain that \(P\)-a.s.
	\begin{align*}
		\rd f (\g (\Y_t)) &= f'_- (\g (\Y_t)) \vd  \g (\Y_t) + \frac{1}{2} \int_{\g (\I)} \rd L^x_t (\g (\Y)) 2k (x) \, \tm (\rd x)
		\\&= f'_- (\g (\Y_t)) \vd  \g (\Y_t) + \int_{J} \rd L^{\g (x)}_t (\g (\Y)) k (\g (x)) \frac{\s'_+ (x)}{\G (x)} \, \m (\rd x)
		\\&= f'_- (\g (\Y_t)) \vd  \g (\Y_t) + \int_{J} \rd L^{\s (x)}_t (\s (\Y)) k (\g (x))  \, \m (\rd x)
		\\&= f'_- (\g (\Y_t)) \vd  \g (\Y_t) + k (\g (\Y_t)) \vd t. \phantom \int
	\end{align*}
	Using \eqref{eq: g (Y) dynamics}, we compute further
	\begin{align*} 
		d f (\g (\Y_t)) &= f'_- (\g (\Y_t)) \g'_- (\Y_t) \vd  \Y_t - f'_- (\g (\Y_t)) r \Y_t \g'_- (\Y_t) \vd t + k (\g (\Y_t)) \vd t
		\\&= f'_- (\g (\Y_t)) \g'_- (\Y_t) \, (\rd  \Y_t - r \Y_t \vd t) + k (\g (\Y_t)) \vd t.
	\end{align*} 
	By integration by parts, we observe that 
	\begin{align*}
		e^{rt} \vd  \tY_t = \vd  \Y_t - r \Y_t \vd t. 
	\end{align*}
	Putting the pieces together, we obtain that 
	\begin{align} \label{eq: pre 1}
		d f (\g (\Y_t)) &= f'_- (\g (\Y_t)) \g'_- (\Y_t) e^{rt} \vd \tY_t + k (\g (\Y_t)) \vd t.
	\end{align} 
	
	{\em Step 2:} Assume now that \(Q\) is an ELMM.
	Then, \eqref{eq: pre 1} holds also under \(Q\) and, because \(\tY\) is a local \(Q\)-martingale, the same is true for 
	\[
	f (\g (\Y_t)) - \int_0^t k (\g (\Y_s)) \vd s, \quad t \in [0, T]. 
	\] 
	In fact, by the boundedness of \(f\) and \(k\), this process is even a true martingale.
	Applying the martingale problem for general diffusions, \cite[Lemmata~B.4, B.5]{CU_AAP_25}, locally, we obtain that 
	\[
	Q \circ S^{-1} = Q_{s_0} \text{ on } \mathcal{W}_{T_\I - }, \quad T_\I := \inf \{t \in [0, T] \colon \X_t \not \in \I \}. 
	\] 
	In particular, \(Q_{s_0} (T_\I > T) = Q \circ \Y^{-1} (T_\I > T) = 1\), because \(Q \sim P\). Thus, \(\J = \I\), as otherwise we get a contradiction to \cite[Theorem~1.1]{bruggeman}, and \eqref{eq: ELMM identity} follows. Furthermore, we conclude from \cite[Corollary~2.21]{CU_AAP_25} that the second part of the necessary condition holds. This proves (b) and one implication of (a).
	
	\smallskip
	{\em Step 3:} To complete the proof of (a), it is left to prove the sufficient conditions for the existence of an ELMM. We prove the claim assuming that the underlying filtered probability space is the canonical one. Given this, the existence of an ELMM on the general setup can be deduced precisely as in the ``alternative proofs'' for Corollary~3.11 in \cite[Section~5.7]{CU_FS_25}. To outline the idea, suppose there exists an ELMM for the canonical market and denote the density process with respect to the real-world measure by \((Z_t)_{t \in [0, T]}\). Then, on an arbitrary setup (meaning not necessarily the canonical one), we may define an ELMM through the density \(Z_T (S)\). More precisely, then \((Z_t (S))_{t \in [0, T]}\) is an a.s. strictly positive martingale, and \((\widetilde{S}_t Z_t (S))_{t \in [0, T]}\) is a local martingale, by \cite[Proposition~III.3.8]{JS}, \cite[Theorem~10.37]{Jacod} and \cite[Lemma~5.13]{CU_FS_25}, entailing that \(Z_T (S)\) serves as density for an ELMM again by \cite[Proposition~III.3.8]{JS}. To complete the proof we need to prove the existence of an ELMM in the canonical setup.
	Let \(Q\) be as in \eqref{eq: ELMM identity}. Then, taking into account that \(P\) and \(Q\) have no accessible boundary points, and recalling Remark~\ref{rem. SA}~(c), \cite[Corollary~2.21]{CU_AAP_25} shows that \(Q\) and \(P\) are equivalent on \(\mathcal{W}_T\). Thus, it suffices to understand that \(\tY\) is a \(Q\)-local martingale. By the equivalence of \(P\) and \(Q\), all equations that hold under \(P\) also hold under~\(Q\), and with \eqref{eq: g (Y) dynamics} we get \(Q\)-a.s.
	\begin{align*}
		\vd \tY_t = e^{-rt} \vd  \Y_t - r e^{-rt} \Y_t \vd t = \frac{e^{- rt}}{\g'_- (\Y_t)} \, (\g'_- (\Y_t) \,  \vd \Y_t - r \Y_t \g'_- (\Y_t) \vd t) = \frac{e^{-rt}}{\g'_- (\Y_t)} \vd  \g (\Y_t). 
	\end{align*}
	Now, as \(\g (\Y)\) is a local \(Q\)-martingale by \cite[Proposition~VII.3.5]{RY}, we conclude that \(\tY\) is a local \(Q\)-martingale, too. This completes the proof of (a). \qed

	\subsection{Proof of Theorem~\ref{thm: identification}}
	
	By Lemma~\ref{lem: simple ineq} and Theorem~\ref{theo: main1}, we only have to prove that 
	\[
	E^{Q_{s_0}} \big[ e^{-rT}h (\X_T) \big] \leq x (h, T). 
	\]
	Let \(H \in \Pi_{\text{adm}}\) be such that \(P\)-a.s. \(V^H_0 = x\) and \(\limsup_{t \nearrow T} V^H_{t} \geq h (\Y_T)\). As \(H\) is self-financing, we get from Eq. (7) on p.~644 in \cite{shir} that \(P\)-a.s. 
	\begin{align} \label{eq: from shir}
		e^{- rt} V^H_t  = x+  \int_0^t H^{(1)}_s \vd \tY_s, \quad t < T.
	\end{align}
	Let \(Q\) be an ELMM. By the equivalence of \(Q\) and \(P\), the equality \eqref{eq: from shir} also holds \(Q\)-a.s. Moreover, as \(\oS\) is a continuous local \(Q\)-martingale, the process \((e^{- rt} V^H_t)_{t < T}\)
	is a local \(Q\)-martingale and, by its boundedness from below (which follows from the admissibility of \(H\)), even a \(Q\)-supermartingale. Consequently, by \cite[Theorems~3.19 and 3.25]{LeGall} the limit \(V^H_{T-} := \lim_{t \nearrow T} V^H_t\) exists \(Q\)-a.s. and 
	\[
	E^Q \big[ e^{- rT} V^H_{T-} \big] \leq E^Q \big[ V^H_0 \big] = x. 
	\] 
	Using that \(P\)-a.s., and by the equivalence \(P \sim Q\) also \(Q\)-a.s., \(\limsup_{t \nearrow T} V^H_{t} \geq h (\Y_T)\), we conclude that 
	\[
	E^Q \big[ e^{- rT} h (\Y_T) \big] \leq x.
	\] 
	As \(Q \circ \Y^{-1} = Q_{s_0}\) by the final part of Theorem~\ref{theo: ELMM}, the theorem is proved. \qed

    \section*{Acknowledgments}
	
	Alexis Anagnostakis was supported by the Centro de Modelamiento Matemático (CMM) BASAL fund FB210005 for center of excellence from ANID-Chile.

    \bibliographystyle{abbrv}  
	\bibliography{bibfile}
	
\end{document}